\documentclass[10pt]{llncs}%
\pagestyle{headings} 
\usepackage{amsfonts}
\usepackage{amsmath}
\usepackage{amssymb}
\usepackage{graphicx}%

\setlength{\belowcaptionskip}{-17pt}

\newcommand{\Q}{\mathcal{Q}}

\newcommand{\G}{\mathcal{G}}

\renewcommand{\L}{\mathcal{L}}

\newcommand{\p}{p}
\newcommand{\pt}[1]{p_{#1}}

\newcommand{\comp}[2]{\overline{#1}({#2})}

\renewcommand{\Q}{\mathcal{Q}}

\newcommand{\grq}[2]{G_{{#1},{#2}}}

\newcommand{\indcycle}{\textsf{RCC}}

\begin{document}

\title{On strongly chordal graphs that are not leaf powers}
\author{Manuel Lafond\,$^{1}$}
\institute{$^{1}$ Department of Mathematics and Statistics, University of Ottawa, Ottawa, Canada \\ \email{mlafond2@uOttawa.ca}}


\maketitle

\vspace{-7mm}

\begin{abstract}
A common task in phylogenetics is to find an evolutionary tree representing  
proximity relationships between species.
This motivates the notion of leaf powers: 
a graph $G = (V, E)$ is a leaf power if there exist a tree $T$ on leafset $V$ and a threshold $k$  such
that $uv \in E$ if and only if the distance between $u$ and $v$ in $T$ is at most $k$.
Characterizing leaf powers is a challenging open problem, along with determining the complexity of their recognition.
Leaf powers are known to be strongly chordal, but few strongly chordal graphs are known to \emph{not} be leaf powers, as such graphs are difficult to construct.  
Recently, Nevries and Rosenke asked if leaf powers could be characterized 
by strong chordality and a finite set of forbidden induced subgraphs.

In this paper, we provide a negative answer to this question, by exhibiting an infinite family $\G$ of (minimal)
strongly chordal graphs that are not leaf powers.
During the process, we establish a connection between leaf powers, alternating cycles
and quartet compatibility.
We also show that deciding if a chordal graph is $\G$-free is NP-complete.
\end{abstract}

\vspace{-8mm}

\section{Introduction}

\vspace{-1mm}

In phylogenetics, a classical method for inferring an evolutionary 
tree of species is to construct the tree from a distance matrix, which depicts how close or far
each species are to one and another.
Roughly speaking, similar species should be closer to  each other in the tree than more distant species.
In some contexts, the actual distances are ignored (e.g. when they cannot be trusted 
due to errors), and only the notions 
of ``close'' and ``distant'' are preserved.  This corresponds to a graph in which the vertices are the species, and two vertices share an edge if and only if they are ``close''.
This motivates the definition of \emph{leaf powers}, which was proposed by Nishimura et al. in~\cite{nishimura2002graph}:
a graph $G = (V, E)$ is a leaf power if there exist a tree $T$ on leafset $V(G)$ and a threshold $k$  such
that $uv \in E$ if and only if the distance between $u$ and $v$ in $T$ is at most $k$.
Hence the tree $T$, which we call a \emph{leaf root}, is a potential evolutionary history for $G$, as it satisfies 
the notions of ``close'' and ``distant'' depicted by $G$.
It is also worth noting that this type of similarity graph is also encountered in the context of gene \emph{orthology} inference, 
which is a special type of relationship between genes (see e.g.~\cite{li2003orthomcl,tatusov2000cog}).
A similarity graph $G$ is used as a basis for the inference procedure, and being able to verify that $G$ is a leaf power would provide a basic test as to whether $G$ correctly depicts similarity, as such graphs are known to contain errors~\cite{lafond2014orthology}.

A considerable amount of work has been done on the topic of leaf powers (see~\cite{calamoneri2016pairwise} for a survey), 
but two important challenges remain open:
to determine the computational complexity of recognizing leaf powers, 
and to characterize the class of leaf powers from a graph theoretic 
point of view.
Despite some interesting results on graph classes that are leaf powers~\cite{brandstadt2008ptolemaic,brandstadt2010rooted,kennedy2006strictly}, 
both problems are made especially difficult 
due to our limited knowledge on graphs that are \emph{not} leaf powers.
Such knowledge is obviously fundamental for the characterization of leaf powers, but 
also important from the algorithmic perspective: 
if recognizing leaf powers is in $P$, a polynomial time algorithm 
is likely to make usage of structures to avoid, 
and if it is NP-hard, a hardness reduction 
will require knowledge of many non-leaf powers in order to generate ``no''
instances.

It has been known for many years that leaf powers must be strongly chordal (i.e. chordal and sun-free).
Brandst{\"a}dt et. al exhibited one strongly chordal non-leaf power by 
establishing an equivalence between leaf powers and NeST graphs~\cite{bibelnieks1993neighborhood,brandstadt2010rooted}.
Recently~\cite{nevries2016towards}, Nevries and Rosenke found seven such graphs, 
all identified by the notion of bad 2-cycles in \emph{clique arrangements},
which are of special use in strongly chordal graphs~\cite{nevries2015characterizing}.
These graphs have at most 12 vertices, and in~\cite{DBLP:journals/corr/NevriesR14}, the authors 
conjecture that they are the only strongly chordal non-leaf powers.
This was also posed as an open problem in~\cite{calamoneri2016pairwise}.
A positive answer to this question would imply a polynomial time algorithm for recognizing leaf powers, as strong chordality can be checked in $O(\min \{m \log n, n^2\})$ time~\cite{paige1987three,spinrad1993doubly}. 

In this paper, we unfortunately give a negative answer to this question.
We exhibit an infinite family $\G$ of strongly chordal graphs that are not 
leaf powers, and each graph in this family is minimal for this property
(i.e. removing any vertex makes the graph a leaf power).
This is done by first establishing a new necessary condition for a graph $G$ to be a leaf power, 
based on its \emph{alternating cycles} (which are cyclic orderings of vertices that alternate between an edge and a non-edge).
Namely, there must be a tree $T$ that can satisfy the edges/non-edges of each alternating cycle $C$ of $G$ after (possibly) subdividing some of its edges (see Section~\ref{sec:altcycles} for a precise definition).
This condition has two interesting properties.  First, every graph currently known to not be a leaf power fails to satisfy this condition.  And more importantly, this provides new tools for the construction of novel classes of non-leaf powers.  In particular, alternating cycles on four vertices enforce the leaf root to contain a 
specific \emph{quartet}, a binary tree on four leaves.
This connection lets us borrow from the theory of \emph{quartet compatibility},
which is well-studied in phylogenetics (see e.g.~\cite{bandelt1986reconstructing,berry1999quartet,shutters2013incompatible,steel1992complexity}).
More precisely, we use results from~\cite{shutters2013incompatible} to 
create a family $\G$ of strongly chordal graphs whose $4$-alternating cycles enforce 
a minimal set of incompatible quartets.
We then proceed to show that deciding if a chordal graph $G$ contains a member
of $\G$ as an induced subgraph is NP-complete.
Thus, $\G$-freeness is the first known property of non-leaf powers that we currently ignore how to check in polynomial time.
This result also indicates that if the problem admits a polynomial time algorithm, it will have 
to make use of strong chordality (or some other structural property), 
since chordality alone is not enough to identify forbidden structures quickly.

The paper is organized as follows:
in Section~\ref{sec:prelim}, we provide some basic notions and facts.
In Section~\ref{sec:altcycles}, we establish the connection between 
leaf powers, alternating cycles and quartets, along with its implications.  In Section~\ref{sec:notleafpower}, we exhibit the family $\G$ of strongly chordal graphs that are not leaf powers.  We then show in Section~\ref{sec:hardness}
that deciding if a chordal graph is $\G$-free is NP-complete.

\vspace{-2mm}

\section{Preliminary notions}\label{sec:prelim}

All graphs in this paper are simple and finite.
For $k \in \mathbb{N}^+$, we use the notation $[k] = \{1, \ldots, k\}$.
We denote the set of vertices of a graph $G$ by $V(G)$, 
its set of edges by $E(G)$, 
and its set of non-edges by $\comp{E}{G}$.  By $G[X]$ we mean the subgraph induced by $X \subseteq V(G)$.
The set of neighbors of $v \in V(G)$ is $N(v)$.
The $P_4$ is the path of length $3$ and the $2K_2$ is the graph 
consisting of two vertex-disjoint edges.
A \emph{$k$-sun}, denoted $S_k$, is the graph obtained by starting from a clique of size $k \geq 3$ 
with vertices $x_1, \ldots, x_k$, then adding vertices 
$a_1, \ldots, a_k$ such that $N(a_i) = \{x_i, x_{i + 1}\}$ for 
each $i \in [k - 1]$ and $N(a_k) = \{x_k, x_1\}$.
A graph is a sun if it is a $k$-sun for some $k$, and $G$ is \emph{sun-free}
if no induced subgraph of $G$ is a sun.

A graph $G$ is \emph{chordal} if it has no induced cycle with four vertices or more, and $G$ is \emph{strongly chordal} if it is chordal 
and sun-free.  A vertex $v$ is \emph{simplicial} if $N(v)$ is a clique, 
and $v$ is \emph{simple} if it is simplicial and, in addition, 
for every $x, y \in N(v)$, one of $N(x) \subseteq N(y) \setminus \{x\}$ or $N(y) \subseteq N(x) \setminus \{y\}$ holds.  An ordering $(x_1, \ldots, x_n)$ of $V(G)$ 
is a \emph{perfect elimination ordering} if, for each $i \in [n]$, 
$x_i$ is simplicial in $G[\{x_i, \ldots, x_n\}]$.
The ordering is \emph{simple} if, for each $i \in [n]$, 
$x_i$ is simple in $G[\{x_i, \ldots, x_n\}]$.
It is well-known that a graph is chordal if and only if 
it admits a perfect elimination ordering~\cite{fulkerson1965incidence}, 
and a graph is strongly chordal if and only if it admits 
a simple elimination ordering~\cite{farber1983characterizations}.


Denote by $\L(T)$ the set of leaves of a tree $T$.
We say a graph $G = (V, E)$ is a $k$-leaf power 
if there exists a tree $T$ with $\L(T) = V$ such that 
for any two distinct vertices $u, v \in V$, 
$uv \in E$ if and only if the distance between $u$ and $v$ in 
$T$ is at most $k$.
Such a tree $T$ is called a $k$-leaf root of $G$.
A graph $G$ is a \emph{leaf power} if there exists a positive integer $k$
such that $G$ is a $k$-leaf power.

A \emph{quartet} is an unrooted binary tree on four leaves (an unrooted tree $T$ is binary if all its internal vertices have degree exactly $3$).
For a set of four elements $X = \{a,b,c,d\}$, 
there exist $3$ possible quartets on leafset $X$ which we denote $ab|cd, ac|bd$ and $ad|bc$, depending on how internal edge separates the leaves.  
We say that $T$ \emph{contains} a quartet $ab|cd$ if 
$\{a,b,c,d\} \subseteq \L(T)$ and 
the path between $a$ and $b$ does not intersect the path between $c$ and $d$.
We denote $\Q(T) = \{ab|cd : T$ contains $ab|cd\}$.
We say that a set of quartets $Q$ 
is \emph{compatible} if there exists a tree $T$ such that
$Q \subseteq \Q(T)$, and otherwise $Q$ is \emph{incompatible}. 

For a tree $T$ and $x, y \in V(T)$, $\pt{T}(x, y)$ denotes the set of edges
on the unique path between $x$ and $y$.  We may write $\p(x, y)$ when $T$ is clear from the context.
It will be convenient to extend the definition of leaf powers to weighted edges.
A \emph{weighted} tree $(T, f)$ is a tree accompanied 
by a function $f: E(T) \rightarrow \mathbb{N}^+$ weighting its edges.
If $F \subseteq E(T)$, we denote $f(F) = \sum_{e \in F}f(e)$.
The distance $d_{T,f}(x, y)$ between two vertices of $T$
is given by $f(\p(x, y))$, i.e. the sum of the weights of the edges lying on the $x-y$ path in $T$.
We may write $d_f(x, y)$ for short.
We say that $(T, f)$ is a \emph{leaf root} of a graph $G$ 
if there exists an integer $k$ such that 
$xy \in E(G)$ iff $d_f(x, y) \leq k$.
We will call $k$ the \emph{threshold} corresponding to $(T, f)$.
Note that in the usual setting, the edges of leaf roots are not weighted,
though arbitrarily many degree $2$ vertices are allowed.
It is easy to see that this distinction is merely conceptual, since 
an edge $e$ with weight $f(e)$ can be made unweighted by subdividing it $f(e) - 1$ times.

A tree $T$ is \emph{unweighted} if it is not equipped with a weighting function.  
We say an unweighted tree is an \emph{unweighted leaf root} of a graph $G$
if there is a weighting $f$ of $E(T)$ 
such that $(T, f)$ is a leaf root of $G$.

A first observation that will be of convenience later on is that, even though the usual definition of leaf powers does not allow edges of weight $0$, they do not alter the class of leaf powers.

\begin{lemma}\label{lem:zero-edges}
Let $G$ be a graph, and let $(T, f)$ be a weighted tree in which $\L(T) = V(G)$ and $f(e) \geq 0$ for each $e \in E(T)$.  If there exists an integer $k$ such that 
$uv \in E(G) \Leftrightarrow d_f(u, v) \leq k$, then $T$ is an unweighted leaf root of $G$. 
\end{lemma}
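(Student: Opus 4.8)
The plan is to reduce the weighting $f$, which may use $0$ on some edges, to a strictly positive weighting $f'$ that defines the same graph $G$ with a (possibly rescaled) threshold, and then invoke the subdivision remark already made in the excerpt to conclude that $T$ is an unweighted leaf root. The only obstacle is that naively adding a constant to every edge weight changes all pairwise distances by an amount proportional to the number of edges on the path, which differs from pair to pair; so the separation between ``$\le k$'' and ``$> k$'' can collapse. The fix is to first multiply by a large integer, creating slack, and only then add $1$ to each edge.

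Concretely, let $n = |V(G)|$, so every path in $T$ between two leaves has at most $2n - 3$ edges (a bound on the number of edges in $T$ suffices). First I would set $g = (2n)\cdot f$, i.e.\ $g(e) = 2n \cdot f(e)$ for each $e \in E(T)$; then $d_g(u,v) = 2n\, d_f(u,v)$ for all $u,v$, so $(T,g)$ is still a leaf root of $G$ with threshold $2nk$, and now consecutive attainable distances differ by at least $2n$ whenever they differ at all. Next define $f'(e) = g(e) + 1$; since $g(e) \ge 0$ and is an integer, $f'(e) \ge 1$, so $f' : E(T) \to \mathbb{N}^+$ is a legitimate positive weighting. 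For any leaves $u,v$, writing $\ell(u,v) = |p(u,v)|$ for the number of edges on their path, we have $d_{f'}(u,v) = d_g(u,v) + \ell(u,v)$ with $1 \le \ell(u,v) \le 2n - 1 < 2n$.

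It remains to choose the new threshold. Set $k' = 2nk + (2n - 1)$. If $uv \in E(G)$, then $d_f(u,v) \le k$, so $d_g(u,v) \le 2nk$ and $d_{f'}(u,v) = d_g(u,v) + \ell(u,v) \le 2nk + (2n-1) = k'$. Conversely, if $uv \notin E(G)$, then $d_f(u,v) \ge k+1$, so $d_g(u,v) \ge 2n(k+1) = 2nk + 2n$, and hence $d_{f'}(u,v) = d_g(u,v) + \ell(u,v) \ge 2nk + 2n + 1 > k'$. Thus $uv \in E(G) \Leftrightarrow d_{f'}(u,v) \le k'$, so $(T, f')$ is a leaf root of $G$ with all edge weights in $\mathbb{N}^+$. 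By the observation in the excerpt that an edge of weight $f'(e)$ may be replaced by a path of $f'(e)$ unweighted edges (subdividing $f'(e)-1$ times) without changing any distance, $T$ itself — being obtained from this subdivision by contracting the added degree-$2$ vertices, i.e.\ being the original tree equipped with $f'$ — is an unweighted leaf root of $G$, which is what we wanted. $\qed$
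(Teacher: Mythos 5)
Your proof is correct and takes essentially the same approach as the paper: multiply all weights by a factor exceeding the maximum number of edges on a leaf-to-leaf path to create slack, then make every edge weight strictly positive (the paper bumps only the zero-weight edges to $1$, you add $1$ everywhere; the threshold arithmetic is the same). One tiny caveat: since $T$ may contain degree-$2$ internal vertices, the bound $2n-3$ on path length is not justified in general, but as you note yourself any bound on the number of edges of $T$ works, so the argument is unaffected.
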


\begin{proof}
If no edge has weight $0$, there is nothing to do.
Otherwise, we devise a weighting function $f'$ for $T$. 
Let $d = \max_{x,y \in V(T)}|\p(x, y)|$.
Set $f'(e) = (d + 1) \cdot f(e)$ for each $e \in E(T)$ having $f(e) > 0$, 
and $f'(e) = 1$ for each $e \in E(T)$ having $f(e) = 0$.
If $d_f(x, y) \leq k$, then $d_{f'}(x, y) \leq (d + 1) k + d$, 
and if $d_f(x, y) \geq k + 1$, then $d_{f'} \geq (d + 1) k + (d + 1)$.
The threshold $(d + 1) k + d$ shows that $T$ is 
an unweighted leaf root of $G$.
\qed
\end{proof}

A tree $T'$ is a \emph{refinement} of a tree $T$ 
if $T$ can be obtained from $T'$ by contraction of edges.
A consequence of the above follows.

\begin{lemma}\label{lem:refinement}
Let $T$ be an unweighted leaf root of a leaf power $G$.
Then any refinement $T'$ of $T$ is also 
an unweighted leaf root of $G$.
\end{lemma}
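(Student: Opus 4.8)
The plan is to recycle the weighting we already have. Let $f$ be a weighting of $E(T)$ and let $k$ be an integer such that $(T, f)$ is a leaf root of $G$ with threshold $k$. Since $T$ is obtained from $T'$ by contracting some set $F \subseteq E(T')$ of edges and both $T'$ and $T$ are trees, contraction restricts to a bijection between $E(T') \setminus F$ and $E(T)$; for $e \in E(T') \setminus F$, write $\bar{e}$ for the edge of $T$ it maps to. I then define $f' : E(T') \to \mathbb{N}$ by $f'(e) = 0$ whenever $e \in F$ and $f'(e) = f(\bar{e})$ otherwise. The idea is that contracting the newly introduced edges costs nothing distance-wise, so $(T', f')$ should realize the same graph with the same threshold, after which Lemma~\ref{lem:zero-edges} removes the zero weights.

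The key step is to check that $f'$ preserves leaf-to-leaf distances. Fix $u, v \in V(G) = \L(T) = \L(T')$. Contracting $F$ maps the $u$--$v$ path of $T'$ onto the $u$--$v$ path of $T$, deleting exactly the edges of $\pt{T'}(u,v) \cap F$; since $f'$ assigns weight $0$ to those edges and weight $f(\bar{e})$ to each remaining edge $e \in \pt{T'}(u,v)$, we obtain $d_{f'}(u, v) = f'(\pt{T'}(u,v)) = f(\pt{T}(u,v)) = d_f(u, v)$. Hence $uv \in E(G) \Leftrightarrow d_f(u,v) \le k \Leftrightarrow d_{f'}(u,v) \le k$, so $(T', f')$ is a weighted tree with $\L(T') = V(G)$ and nonnegative edge weights realizing $G$ with threshold $k$. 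Lemma~\ref{lem:zero-edges} then immediately yields that $T'$ is an unweighted leaf root of $G$.

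The only delicate part is the elementary combinatorics of edge contraction in a tree: that $E(T') \setminus F$ maps bijectively onto $E(T)$, and that the image of the $u$--$v$ path is again the $u$--$v$ path with its $F$-edges suppressed. Both follow from the fact that the connected components of $(V(T'), F)$ are subtrees and that no edge outside $F$ can have both endpoints in a single such component (otherwise $T'$ would contain a cycle). I therefore expect no genuine obstacle here, since the substantive work has already been packaged into Lemma~\ref{lem:zero-edges}.
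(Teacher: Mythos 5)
Your proof is correct and follows exactly the paper's approach: keep the original weighting on the surviving edges, assign weight $0$ to the newly created (contracted) edges, observe that distances are unchanged, and invoke Lemma~\ref{lem:zero-edges}. You simply spell out the edge-bijection and path-preservation details that the paper leaves implicit.
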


\begin{proof}
We may take a weighting $f$ such that $(T, f)$ is a leaf root of $G$, refine it in order to obtain $T'$, weight the newly created edges by $0$ and apply Lemma~\ref{lem:zero-edges}.
\qed
\end{proof}

The following was implicitly proved in~\cite{brandstadt2008ptolemaic}. 
We include the proof in the Appendix for the sake of completeness.

\begin{lemma}\label{lem:degone-dominating}
Suppose that $G$ has a vertex $v$ of degree $1$.
Then $G$ is a leaf power if and only if $G - v$ is a leaf power.
\end{lemma}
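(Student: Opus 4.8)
The plan is to prove both directions. The forward direction ($G$ leaf power $\Rightarrow$ $G-v$ leaf power) is immediate: if $(T,f)$ is a leaf root of $G$ with threshold $k$, simply delete the leaf $v$ from $T$ (and suppress the resulting degree-2 vertex, or keep it — by Lemma~\ref{lem:zero-edges} and Lemma~\ref{lem:refinement} this is harmless), and the same threshold $k$ witnesses that the restricted tree is a leaf root of $G-v$, since pairwise distances among $V(G)\setminus\{v\}$ are unchanged. The real content is the converse.

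For the converse, let $u$ be the unique neighbor of $v$ in $G$, and suppose $(T', f')$ is a leaf root of $G - v$ with threshold $k$. Using Lemma~\ref{lem:zero-edges} we may assume weights are positive integers; by scaling up all weights by a large factor we gain room to make perturbations. The idea is to attach a new leaf edge for $v$ to the tree in a location that makes $v$ close (distance $\le k$) to $u$ and far ($> k$) from every other vertex. The natural place is to attach $v$ near the leaf $u$: subdivide the pendant edge of $u$ (or use a refinement as in Lemma~\ref{lem:refinement}) to create an attachment point $p$ on the $u$–side, then attach $v$ to $p$ by a new edge. One must choose the position of $p$ and the weight of the new edge so that $d_f(v,u)\le k$ while $d_f(v,w) > k$ for all $w\ne u$. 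Since in $G-v$ we already have that $u$'s behavior is governed by $k$, and every $w$ satisfies either $d_{f'}(u,w)\le k$ or $> k$, placing $v$ as a pendant extremely close to $u$ makes $d_f(v,w) \approx d_{f'}(u,w)$ for all $w$; the only danger is a vertex $w$ with $d_{f'}(u,w) = k$ exactly (so $uw\in E$ but we'd need $vw\notin E$ unless $w$ is also adjacent to... ). This is where one must be careful, and it is the main obstacle.

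To handle this, I would first pre-scale: replace $f'$ by $f'' = (d+1)f'$ where $d$ bounds path lengths, giving threshold $k'' = (d+1)k$; now "slack" distances differ from $k''$ by at least $d+1$. Then attach $v$ to $u$'s pendant edge via a subdivision point at distance (say) $1$ from $u$ along that edge, connected to $v$ by an edge of weight $1$, so $d(v,u) = 2$ and $d(v,w) = d''(u,w) + 2$ for every $w \neq u$ (the new path from $v$ always routes through $u$'s side). Now choose the new threshold $k^* = k'' + 2$... but wait, this also shifts $u$: we need $d(u,u')$ comparisons unchanged, which they are, but the threshold changed. The clean fix is: keep threshold $k''$ but instead make the $v$-edge carry enough weight that $d(v,u) = k''$ exactly is impossible — actually the robust approach is to observe we need $d(v,w)\le k^* \iff w=u$, i.e. $d''(u,w)+2 \le k^* \iff w = u$; taking $k^* = k''+2$ fails for $w$ with $d''(u,w)=k''$, i.e. $w\in N_G(u)$. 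So instead, route $v$ so that it is a pendant hanging off a fresh subdivision of $u$'s own leaf edge placed at distance $(d+1)/2$-ish from $u$: concretely, subdivide $u$'s pendant edge (weight $\ge d+1$ after scaling) into two parts meeting at $q$, attach $v$ to $q$ by weight $1$. Then $d(v,u)$ and $d(v,w)$ can be tuned: $d(v,u) = (\text{dist } q\text{ to }u) + 1$ and $d(v,w) = d''(u,w) - (\text{dist }u\text{ to }q) + 1$ for $w$ on the far side. Choosing the split so that $d(v,u) \le k''$ and $d(v,w) > k''$ for all $w\ne u$ — equivalently $1 \le (\text{dist }q,u) \le k''-1$ while $d''(u,w) - (\text{dist }u,q) + 1 > k''$ for all $w\ne u$ — the worst case is $w$ with minimal $d''(u,w)$, namely $d''(u,w) = d+1$ (a nearest neighbor), forcing $(\text{dist }u,q) < d+2 - k''$, which is negative when $k'' > d+1$; so this still needs care. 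I would therefore instead insert $v$ via a longer pendant path and increase $k$ by the same amount, carefully checking the two inequalities $d(v,u)\le k^* $ and, for each $w$, $d(v,w) > k^*$, using that after scaling the gap between "edge" and "non-edge" distances from $u$ is at least $d+1$ while the perturbation introduced is bounded by $O(1)$. The one genuinely delicate case — a vertex $w \ne u$ with $uw \in E(G)$, which must end up non-adjacent to $v$ — is resolved precisely because after scaling $d''(u,w) \le k''$ leaves no room below, so we push $v$ slightly past: set $d(v,u) = k'' - (d+1)/2$ range... In short, the combinatorial heart is a one-dimensional interval-intersection argument: the set of valid "insertion depths" for $v$ along $u$'s pendant edge is nonempty once weights are scaled, because it is an intersection of half-lines whose defining constants are separated by more than the scaling factor. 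I would present this as: scale, subdivide $u$'s pendant edge finely, show the feasible insertion point exists by the pigeonhole/interval argument, then invoke Lemma~\ref{lem:zero-edges} to return to an unweighted leaf root. The main obstacle, as flagged, is cleanly handling vertices at distance exactly $k$ from $u$ — handled by the scaling trick that eliminates boundary cases — and verifying that attaching $v$ does not accidentally create an edge $vw$ for some $w$ nonadjacent to $u$ but close to the insertion point, which cannot happen since the new path from $v$ to any $w\ne u$ necessarily traverses the segment toward $u$ and then the original $u$–$w$ path.
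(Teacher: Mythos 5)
Your strategy is the same one the paper uses: attach $v$ as a pendant near its unique neighbor $u$, exploiting the fact that $u$ is a leaf of $T$, so that every $v$--$w$ path ($w \neq u$) must run back along $u$'s pendant edge, giving $d(v,w) = \epsilon + d(u,w) - \delta$ when $v$ hangs by an edge of weight $\epsilon$ from a point at depth $\delta$ along that edge. The problem is that your writeup never actually exhibits a working pair $(\delta, \epsilon)$, and the configurations you do commit to fail. Fixing the new edge weight at $1$ and varying only the insertion depth cannot work: if $u$ has a neighbor $w_0$ in $G - v$ with $d(u,w_0)$ well below $k$ (which scaling does not prevent --- scaling separates edge-distances from non-edge-distances around the threshold, it does not push all distances up toward $k$), the requirement $1 + d(u,w_0) - \delta > k$ has no solution $\delta \geq 0$; the constraint forces the \emph{pendant weight} $\epsilon$ to be comparable to $k$, not the depth $\delta$ to be cleverly chosen. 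Your fallback of ``increase $k$ by the same amount'' is also unsafe unless you check that the adjacencies among $V(G)\setminus\{v\}$ survive the threshold change, which you do not do. You flag each of these failures yourself, but the proposal ends with ``I would present this as\ldots'' rather than a completed argument, so the heart of the proof is missing.

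The clean resolution, which is the paper's, is to put essentially all the weight on the new pendant edge and none on the insertion depth: subdivide $u$'s pendant edge at $u$ itself with a weight-$0$ edge (legitimate by Lemma~\ref{lem:zero-edges}), attach $v$ there with an edge of weight exactly $k$, and keep the threshold $k$. Then $d(v,u) = k \leq k$ and $d(v,w) = k + d(u,w) > k$ for every $w \neq u$, with no scaling, no interval-intersection argument, and no case analysis on vertices at distance exactly $k$ from $u$. Your feasibility framing is salvageable (after scaling, take $\delta$ tiny and $\epsilon$ just below $k - \delta$; the two constraints $\delta + \epsilon \leq k$ and $\epsilon - \delta > k - \min_{w\neq u} d(u,w)$ are simultaneously satisfiable precisely because $2\delta < \min_{w \neq u} d(u,w)$), but you would need to actually carry this out, identifying $\epsilon$ rather than $\delta$ as the parameter that must absorb the value $k$.
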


\vspace{-6mm}

\section{Alternating cycles and quartets in leaf powers}\label{sec:altcycles}

\vspace{-1mm}

In this section, we restrict our attention to alternating cycles 
in leaf powers, which let us establish a new necessary condition on the 
topology of unweighted leaf roots.  
This will serve as a basis for the construction of our family of forbidden induced subgraphs.
Although we will not use the full generality of the statements proved here, we believe they may be of interest for future studies.

Let $(A, B)$ be a pair such that $A \subseteq E(G)$
and $B \subseteq \comp{E}{G}$. 
We say a weighted tree $(T, f)$ \emph{satisfies} $(A, B)$
if there exists a threshold $k$ such that for each edge
$\{x, y\} \in A$, $d_f(x, y) \leq k$ and 
for each non-edge $\{x, y\} \in B$, $d_f(x, y) > k$.
Thus $(T, f)$ is a leaf root of $G$ iff it satisfies $(E(G), \comp{E}{G})$.
For an unweighted tree $T$, we say that 
$T$  \emph{can satisfy} $(A, B)$ if there exists a weighting $f$
of $E(T)$ such that $(T, f)$ satisfies $(A, B)$.

A sequence of $2c$ distinct vertices $C=(x_0, y_0, x_1, y_1, \ldots, x_{c - 1}, y_{c - 1})$  is an \emph{alternating cycle} of a graph $G$ if
for each $i \in \{0, \ldots, c - 1\}$, $x_{i}y_{i} \in E(G)$ and
$y_{i}x_{i + 1} \notin E(G)$ (indices are modulo $c$ in all notions related to alternating cycles).
In other words, the vertices of $C$ alternate beween an edge and a non-edge.  
We write $V(C) = \{x_0, y_0, \ldots, x_{c - 1}, y_{c - 1}\}$, 
$E(C) = \{x_iy_i : 0 \leq i \leq c - 1\}$ and $\comp{E}{C} = \{y_ix_{i + 1} : 0 \leq i \leq c - 1 \}$. 
A weighted tree \emph{satisfies} $C$ if it satisfies $(E(C), \comp{E}{C})$, 
and an unweighted tree \emph{can satisfy} $C$ if it can satisfy 
$(E(C), \comp{E}{C})$.
The next necessary condition for leaf powers is quite an obvious one, but will be of importance  throughout the paper.

\begin{proposition}\label{prop:must-display-cycles}
If $G$ is a leaf power, then there exists an unweighted tree $T$ 
that can satisfy every alternating cycle of $G$.
\end{proposition}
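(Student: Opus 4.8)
The plan is to use a single leaf root of $G$ and observe that it already does all the work. Since $G$ is a leaf power, by definition there is an unweighted tree $T$ with $\L(T) = V(G)$ that is an unweighted leaf root of $G$; that is, there is a weighting $f$ of $E(T)$ such that $(T,f)$ is a leaf root of $G$, with some threshold $k$. So $uv \in E(G) \Leftrightarrow d_f(u,v) \leq k$ for all distinct $u, v \in V(G)$.

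Next I would fix an arbitrary alternating cycle $C = (x_0, y_0, \ldots, x_{c-1}, y_{c-1})$ of $G$ and check that the same weighted tree $(T,f)$ satisfies it. The point is simply that $E(C) \subseteq E(G)$ and $\comp{E}{C} \subseteq \comp{E}{G}$: every $x_i y_i$ is an edge of $G$, hence $d_f(x_i, y_i) \leq k$, and every $y_i x_{i+1}$ is a non-edge of $G$, hence $d_f(y_i, x_{i+1}) > k$. With the threshold $k$, this is exactly the statement that $(T,f)$ satisfies $(E(C), \comp{E}{C})$, i.e.\ satisfies $C$. Consequently $T$ can satisfy $C$, witnessed by the weighting $f$.

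Finally I would note that the choice of $T$ and $f$ did not depend on $C$, so the \emph{same} unweighted tree $T$ (with the same weighting $f$) can satisfy every alternating cycle of $G$ simultaneously, which is precisely the claim. The only thing to be slightly careful about is the passage from ``leaf power'' to ``unweighted leaf root'': this is immediate from the definitions, and if one prefers to start from a weighted leaf root with nonnegative weights it is handled by Lemma~\ref{lem:zero-edges}. There is essentially no obstacle here — the statement holds because an alternating cycle only ever constrains distances between pairs that are edges or non-edges of $G$, so any leaf root of $G$ automatically respects those constraints.
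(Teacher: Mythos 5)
Your proof is correct and is exactly the direct argument the paper has in mind: the paper omits a proof, calling the proposition ``quite an obvious one,'' and the intended reasoning is precisely that a single leaf root $(T,f)$ of $G$ already satisfies $(E(C),\comp{E}{C})$ for every alternating cycle $C$ since $E(C)\subseteq E(G)$ and $\comp{E}{C}\subseteq\comp{E}{G}$. Nothing is missing.
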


As it turns out, every graph that is currently known to not be a leaf power 
fails to satisfy the above condition (actually, we may even restrict our attention to cycles of length $4$ and $6$, as we will see).  
This suggests that it is also sufficient, and we conjecture that
if there exists a tree that can satisfy every alternating cycle of $G$, then $G$ is a leaf power.  
As a basic sanity check towards this statement, we show that in the absence of alternating cycles, a graph is indeed a leaf power.

\begin{proposition}\label{prop:altcycle}
If a graph $G$ has no alternating cycle, then $G$ is a leaf power.
\end{proposition}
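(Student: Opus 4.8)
The plan is to show that if $G$ has no alternating cycle, then $G$ has a very restricted structure that makes it trivially a leaf power. First I would unpack what "no alternating cycle" means for the smallest case $c = 2$: a $2$-alternating cycle is a sequence $(x_0, y_0, x_1, y_1)$ of four distinct vertices with $x_0 y_0, x_1 y_1 \in E(G)$ and $y_0 x_1, y_1 x_0 \notin E(G)$ — in other words, an induced (or at least a ``semi-induced'') $2K_2$, two disjoint edges with no edge between $\{x_0, y_0\}$ and $\{x_1, y_1\}$ beyond the two given ones. Actually, examining the definition carefully, a $2$-alternating cycle only forbids the two specific cross non-edges $y_0 x_1$ and $x_0 y_1$; but by trying all orderings of the four vertices one sees that a graph with \emph{no} $2$-alternating cycle cannot contain two independent edges $ab, cd$ with all four of $ac, ad, bc, bd$ absent, i.e. $G$ is $2K_2$-free. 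So the first key step is: \emph{$G$ has no alternating cycle $\implies$ $G$ is $2K_2$-free}. It is classical that connected $2K_2$-free graphs have diameter at most $2$, but I will need more.

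The second step is to analyze the component structure. If $G$ has at least two connected components each containing an edge, those two edges form a $2K_2$, hence a $2$-alternating cycle — contradiction. So at most one component of $G$ has an edge; every other component is an isolated vertex. By Lemma~\ref{lem:degone-dominating} (applied repeatedly, noting an isolated vertex is trivially handled, or by a direct argument that isolated vertices can be hung anywhere on the tree with a long pendant edge) it suffices to prove the claim for the single nontrivial component, so I may assume $G$ is connected and $2K_2$-free. The third, and I expect most delicate, step is to show that a connected $2K_2$-free graph $G$ with no alternating cycle of \emph{any} length is in fact extremely simple — my intended target is that $G$ is a \emph{complete multipartite graph} or something equally structured, or even just that $G$ has a dominating vertex or a dominating edge. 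The cleanest route: I would try to prove that such a $G$ has no induced $P_4$ and no induced $C_4$ (a $C_4$ gives a $4$-alternating cycle directly; for $P_4 = u - v - w - x$, the pair $uv \in E$, $wx \in E$ with $vw \in E$ but $uw, vx, ux$ possibly present — here one checks $ux \notin E$ and then whether $(u, v, x, w)$ or a similar reordering is alternating), so that $G$ is a cograph; combined with $2K_2$-freeness and connectedness, a cograph is then a join of smaller cographs, and one pushes the induction through.

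The \textbf{main obstacle} I anticipate is getting from "no alternating cycle of any length" down to a finite, checkable structural statement: the definition quantifies over cycles of \emph{all} lengths $c$, so I must argue that forbidding long alternating cycles, together with forbidding the length-$2$ ones, collapses to forbidding a small set of induced subgraphs ($2K_2$, $C_4$, $P_4$, perhaps $C_5$). Once $G$ is pinned down to, say, a complete multipartite graph (or a graph with a universal vertex, handled by Lemma~\ref{lem:degone-dominating} after observing a universal vertex can be realized as a leaf adjacent to the ``center''), the leaf root is exhibited directly: take a star-like tree $T$ with a central path or central vertex, attach each vertex of $G$ as a leaf, and choose edge weights so that two leaves are within threshold $k$ exactly when they lie in different parts — for complete multipartite graphs a caterpillar with one long ``backbone'' edge per part does the job, and Lemma~\ref{lem:zero-edges} lets me be cavalier about small weights. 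I would close by assembling: isolated vertices reattached as pendant leaves with huge-weight edges (non-adjacent to everything), and the nontrivial component handled as above.
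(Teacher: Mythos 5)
Your opening steps are correct and match the key observation underlying the paper's proof, but you take a longer route and do not close it. You correctly note that a $4$-vertex alternating cycle is exactly an induced $P_4$, $C_4$ or $2K_2$ (e.g.\ for the $P_4$ $u\!-\!v\!-\!w\!-\!x$, the sequence $(u,v,x,w)$ is alternating), so a graph with no alternating cycle is $\{P_4, C_4, 2K_2\}$-free --- that is, a threshold graph. This also means the ``main obstacle'' you anticipate (collapsing the quantification over all cycle lengths to a finite list of forbidden subgraphs) is not an obstacle at all: the length-$4$ alternating cycles alone already pin the structure down, and you never need to reason about longer ones. The paper's own proof is a two-line version of this idea: chordless cycles of length $\geq 4$ and induced $P_4$'s (hence gems) all contain alternating cycles, so $G$ is chordal and gem-free, and chordal gem-free graphs are leaf powers by~\cite{brandstadt2008ptolemaic}.

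The genuine gap is in your endgame. Your stated structural target, ``complete multipartite graph,'' is wrong: $K_{2,2}=C_4$ is complete multipartite, contains an alternating cycle, and is not a leaf power, so no argument can land there. The correct target is the threshold graphs, and for those you still owe a proof that they are leaf powers. Your proposed shortcut via Lemma~\ref{lem:degone-dominating} does not work for that step either: that lemma removes a vertex of \emph{degree one}, whereas the vertex you need to peel off a connected threshold graph is a \emph{universal} vertex, and reinserting a universal vertex into an arbitrary leaf root of $G-v$ is not immediate (the leaves of the root may be pairwise close without all being close to any single attachment point, so some rescaling or an explicit caterpillar construction is required). So either write out a caterpillar leaf root for threshold graphs following the nested-neighborhood ordering, or observe that threshold graphs are chordal and gem-free and invoke the same citation the paper does; as written, the proof stops one lemma short.
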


\begin{proof}
Since a chordless cycle of length at least $4$ contains an alternating 
cycle, $G$ must be chordal.  By the same argument, $G$ cannot contain an induced gem (the gem is obtained by taking a $P_4$, and adding a vertex adjacent to each vertex of the $P_4$).
In~\cite{brandstadt2008ptolemaic}, it is shown that chordal gem-free graphs are leaf powers.
\qed
\end{proof}

We will go a bit more in depth with alternating cycles, by first providing a characterization of the 
unweighted trees that can satisfy an alternating cycle $C$.
Let $T$ be an unweighted tree with $V(C) \subseteq V(T)$.
For each $i \in \{0, \ldots, c - 1\}$, we say the path in $T$ between $x_i$ and $y_i$ 
is \emph{positive}, and the path between 
$y_i$ and $x_{i + 1}$ is \emph{negative} (with respect to $C$).

\begin{lemma}\label{lem:alt-cycle-main}
An unweighted tree $T$ can satisfy an alternating 
cycle \\
$C = (x_0, y_0, \ldots, x_{c - 1}, y_{c - 1})$ if and only 
if there exists an edge $e$ of $T$ that belongs to strictly more 
negative paths than positive paths w.r.t. $C$.
\end{lemma}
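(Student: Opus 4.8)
The plan is to translate the condition into a linear-feasibility statement and read it off the structure of $T$. Write $P_i = \pt{T}(x_i, y_i)$ for the $i$-th positive path and $N_i = \pt{T}(y_i, x_{i+1})$ for the $i$-th negative path (indices mod $c$), and for $e \in E(T)$ put $n^+(e) = |\{i : e \in P_i\}|$ and $n^-(e) = |\{i : e \in N_i\}|$. The computation underpinning everything is that, for any weighting $f$ of $E(T)$,
\[
\sum_{i} d_f(x_i, y_i) = \sum_{e \in E(T)} f(e)\, n^+(e)
\quad\text{and}\quad
\sum_{i} d_f(y_i, x_{i+1}) = \sum_{e \in E(T)} f(e)\, n^-(e),
\]
since each path contributes $f(e)$ once for every one of its edges $e$. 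The forward implication is then immediate: if $(T,f)$ satisfies $C$ with threshold $k$, then $d_f(x_i,y_i) \le k < d_f(y_i,x_{i+1})$ for every $i$, so summing and applying the identities gives $\sum_e f(e)\big(n^-(e) - n^+(e)\big) > 0$, and since every $f(e)$ is a positive integer some edge must have $n^-(e) > n^+(e)$.

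For the converse, suppose $n^-(e^*) > n^+(e^*)$ for some edge $e^*$; we must exhibit a satisfying weighting. It suffices to find $w : E(T) \to \mathbb{R}_{\ge 0}$ with $\min_i w(N_i) > \max_j w(P_j)$: scaling $w$ to clear denominators makes it integral while keeping a strict gap (so a threshold exists), and the scaling trick of Lemma~\ref{lem:zero-edges} then disposes of any zero weights. Existence of such a $w$ is a linear-feasibility question, and by a theorem of the alternative (Farkas' lemma) it \emph{fails} precisely when there are coefficients $\alpha_0,\dots,\alpha_{c-1} \ge 0$ and $\beta_0,\dots,\beta_{c-1} \ge 0$ with $\sum_i \alpha_i = \sum_j \beta_j > 0$ and $\sum_i \alpha_i \chi_{N_i} \le \sum_j \beta_j \chi_{P_j}$ coordinatewise over $E(T)$, where $\chi_Q \in \{0,1\}^{E(T)}$ denotes the edge-indicator of a path $Q$. (When $n^-(e) \le n^+(e)$ for all $e$, the choice $\alpha_i = \beta_j = 1$ produces such coefficients, which is exactly the forward implication again; the content below is the converse.) So the goal becomes to rule out such $\alpha,\beta$ whenever a single edge $e^*$ is imbalanced.

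This last step is the main obstacle. The point is that the coordinatewise inequality must hold simultaneously at the bipartition of $V(C)$ induced by \emph{every} edge of $T$, and one wants to leverage this tree-global constraint to force the unweighted inequality $\sum_i \chi_{N_i} \le \sum_j \chi_{P_j}$ coordinatewise, which contradicts $n^-(e^*) > n^+(e^*)$. I would try induction on $|E(T)|$: an edge at a degree-one vertex outside $V(C)$ lies on no path and may be contracted for free, and an edge at a vertex of $V(C)$ that is a leaf of $T$ is automatically balanced ($n^- = n^+$ there); the subtle part is that deleting such a leaf $C$-vertex does not leave a smaller alternating cycle on the nose, so one carries the certificate $(\alpha,\beta)$ along and controls how the domination degrades. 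A more constructive alternative — the one small cases point to — is to build $w$ directly by recursing on the two subtrees obtained from $T - e^*$: give $e^*$ a weight calibrated to the number $n^+(e^*)$ of crossing positive pairs, push every crossing negative pair past the threshold, and weight each side so that its internal pairs are handled by induction; when a side is star-shaped this degenerates to a cyclic ``radius assignment'' problem, which is where the surplus $n^-(e^*) - n^+(e^*) > 0$ is finally used. In all cases the return trip from a real weighting to an integral one is the routine scaling argument mentioned above.
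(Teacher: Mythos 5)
Your necessity direction is correct, and in fact cleaner than the paper's: the paper reaches the same inequality $\sum_i f(\p(y_i,x_{i+1})) > \sum_i f(\p(x_i,y_i))$ via a telescoping induction with the sets $A_i, B_i, R_i$, whereas you get it by directly summing $d_f(x_i,y_i) \le k < d_f(y_i,x_{i+1})$ over $i$; both then conclude by counting how many times each edge weight appears on each side. No issue there.

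The sufficiency direction, however, has a genuine gap. Your Farkas reformulation is correctly set up, but it is only a restatement of the problem: the alternative ``no dual certificate $(\alpha,\beta)$ exists whenever a single edge is imbalanced'' is exactly equivalent to the feasibility you are trying to prove, so no progress has been made until that combinatorial claim is established, and you explicitly leave it open (``This last step is the main obstacle,'' ``I would try induction\dots,'' ``A more constructive alternative\dots''). Neither sketch is carried far enough to be checkable; in particular the induction idea stalls precisely where you say it does, since deleting a leaf of $T$ lying in $V(C)$ destroys the alternating-cycle structure. The paper closes this direction by a quite different and more concrete route that you may want to compare against: contract $T$ down to a double star whose unique internal edge is the imbalanced edge $e^* = uv$ (the incidence of positive and negative paths with $e^*$ is preserved, and Lemma~\ref{lem:refinement} lifts any weighting of the double star back to $T$); classify each $y_i$ as positive, negative or neutral according to whether $uv$ separates it from $x_{i-1}$ and from $x_i$; use a parity argument on crossings to upgrade the hypothesis to $N \ge P + 2$; and then assign weights greedily around the cycle, tracking a per-step drift of $e+1$, $-e+1$ or $1$ according to the type of $y_i$, so that the surplus $N - P \ge 2$ makes the assignment close up consistently at $x_0$. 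That contraction step is the key idea missing from your argument: it localizes all of the tree structure into a single edge, after which the problem really does become the ``cyclic radius assignment'' you allude to, and is solved explicitly rather than by an appeal to duality.
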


\begin{proof}
Due to space constraints, we only prove the ($\Rightarrow$) direction.
The proof of sufficiency is 
relegated to the Appendix.

$(\Rightarrow)$: suppose that no edge is on more negative paths
than positive paths, and yet $T$ can satisfy $C$.
Let $f$ be a weighting such that $(T, f)$ satisfies $C$ with some threshold $k$.
For each $i \in \{0, \ldots, c - 1\}$,  
let $A_i = \p(y_i, x_{i + 1}) \setminus \p(x_{i + 1}, y_{i + 1})$
and $B_i = \p(x_{i + 1}, y_{i + 1}) \setminus \p(y_i, x_{i + 1})$.
Moreover, let $R_i = \p(y_i, x_{i+1}) \cap \p(x_{i + 1}, y_{i + 1})$.
Observe that $f(\p(y_i, x_{i + 1})) = f(A_i) + f(R_i) = f(\p(x_{i + 1}, y_{i+1})) + f(A_i) - f(B_i)$.
We claim that for any integer $j \geq 1$, 

\vspace{-3mm}

$$f(\p(x_0, y_0)) < f(\p(x_j, y_j)) + \sum_{i = 0}^{j - 1}(f(A_i) - f(B_i))$$

\vspace{-1mm}

(where the indices of the $x_j, y_j, A_i$ and $B_i$ are taken modulo $c$).
This is easily proved by induction.
For $j = 1$, 
we have $f(\p(x_0, y_0)) \leq k < f(\p(y_0, x_1)) = f(\p(x_1, y_1)) + f(A_0) - f(B_0)$
since $x_0y_0$ is an edge of $C$ but $y_0x_1$ is not.
For higher values of $j$, 
the same argument can be applied inductively:
suppose $f(\p(x_0, y_0)) < f(\p(x_{j - 1}, y_{j - 1})) + \sum_{i = 0}^{j - 2}(f(A_i) - f(B_i))$.  The claim follows from the fact that
$f(\p(x_{j - 1}, y_{j - 1})) \leq k < f(\p(y_{j - 1}, x_j)) = 
f(\p(y_j, x_{j})) + f(A_{j - 1}) - f(B_{j - 1})$.

Using the above claim, by setting $j = c$, 
we obtain $f(\p(x_0, y_0)) < f(\p(x_0, y_0)) + \sum_{i = 0}^{c - 1}(f(A_i) - f(B_i))$, i.e. $\sum_{i = 0}^{c - 1}f(B_i) < \sum_{i = 0}^{c - 1}f(A_i)$.
Then $\sum_{i = 0}^{c - 1}(f(B_i) + f(R_i)) < \sum_{i = 0}^{c - 1}(f(A_i) + f(R_i))$.
But since $\p(y_i, x_{i + 1})$ is the disjoint union of $A_i$ and $R_i$, 
and $\p(x_{i + 1}, y_{i + 1})$ the disjoint union of $B_i$ and $R_i$, 
this implies 
$\sum_{i = 0}^{c - 1}f(\p(x_{i + 1}, y_{i + 1})) < \sum_{i = 0}^{c - 1}f(\p(y_i, x_{i + 1}))$.  For any given edge $e$, $f(e)$ is summed as many times as it appears on a positive path on the left-hand side, and as many times as it appears on a negative 
path on the right-hand side.  Since, by assumption, no edge appears on more 
negative than positive paths, we have reached a contradiction since this inequality is impossible.
\qed
\end{proof}

Lemma~\ref{lem:alt-cycle-main} lets us relate quartets and 4-alternating cycles easily.
If $C = (x_0, y_0, x_1, y_1)$, 
the edges of the quartets $x_0x_1|y_0y_1$ and $x_0y_1|y_0x_1$ do not meet the condition
of Lemma~\ref{lem:alt-cycle-main}, and therefore no unweighted leaf root can contain these quartets.
This was already noticed in~\cite{nevries2016towards}, although this was presented in another form and not stated in the language of quartets.

\begin{corollary}\label{cor:qtet}
Let $C = (x_0, y_0, x_1, y_1)$ be a $4$-alternating cycle of a graph $G$. 
Then a tree $T$ can display $C$ if and only if 
$T$ contains the $x_0y_0|x_1y_1$ quartet.
\end{corollary}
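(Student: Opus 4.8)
The plan is to reduce Corollary~\ref{cor:qtet} to Lemma~\ref{lem:alt-cycle-main} by analyzing, for a $4$-alternating cycle $C=(x_0,y_0,x_1,y_1)$, exactly which edges of a tree $T$ with $V(C)\subseteq V(T)$ can lie on strictly more negative than positive paths. Here there are precisely two positive paths, $\p(x_0,y_0)$ and $\p(x_1,y_1)$, and precisely two negative paths, $\p(y_0,x_1)$ and $\p(y_1,x_0)$. So the condition of Lemma~\ref{lem:alt-cycle-main} becomes: some edge $e$ lies on both negative paths and on neither positive path. I would first observe that ``$T$ can display $C$'' and ``$T$ can satisfy $C$'' are the same statement by the definitions in Section~\ref{sec:altcycles}, so it suffices to prove that $T$ can satisfy $C$ if and only if $T$ contains the quartet $x_0y_0|x_1y_1$.

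For the ($\Leftarrow$) direction, suppose $T$ contains $x_0y_0|x_1y_1$. By definition of containment, the path between $x_0$ and $y_0$ is vertex-disjoint from the path between $x_1$ and $y_1$; since $T$ is a tree, the ``internal'' portion connecting these two paths is a single path $P$, and every edge $e\in P$ separates $\{x_0,y_0\}$ from $\{x_1,y_1\}$ in $T$. Such an $e$ therefore lies on $\p(y_0,x_1)$ and on $\p(y_1,x_0)$ (both negative paths) but on neither $\p(x_0,y_0)$ nor $\p(x_1,y_1)$ (both positive paths) — here one must handle the degenerate case where $P$ consists of a single vertex, i.e. the two positive paths share exactly one vertex; in that case I would argue directly, e.g. by taking $f$ that assigns weight $1$ to the (at least one) edge incident to that vertex toward, say, $x_1$, or more cleanly just exclude this by noting $x_0,y_0,x_1,y_1$ are four distinct vertices, so $P$ cannot be empty and must contain an edge as soon as the two positive paths are disjoint. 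Then Lemma~\ref{lem:alt-cycle-main} gives that $T$ can satisfy $C$.

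For the ($\Rightarrow$) direction, suppose $T$ can satisfy $C$. By Lemma~\ref{lem:alt-cycle-main} there is an edge $e$ on strictly more negative than positive paths, i.e.\ (as there are only two of each) on both negative paths and neither positive path. Removing $e$ splits $T$ into two components; since $e\in\p(y_0,x_1)$, the vertices $y_0$ and $x_1$ lie in different components, and likewise $y_1$ and $x_0$. Since $e\notin\p(x_0,y_0)$, the vertices $x_0$ and $y_0$ lie in the \emph{same} component, and similarly $x_1,y_1$ lie in the same component. Combining, $\{x_0,y_0\}$ is in one component and $\{x_1,y_1\}$ in the other, so the path from $x_0$ to $y_0$ and the path from $x_1$ to $y_1$ both avoid $e$ and lie in opposite components, hence are vertex-disjoint; this is exactly $T$ containing $x_0y_0|x_1y_1$.

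I expect the only real subtlety to be the bookkeeping of the degenerate configuration — when the $x_0$–$y_0$ path and the $x_1$–$y_1$ path meet at a single vertex rather than being separated by a genuine edge — and making sure the definition of ``contains a quartet'' used in the paper (the two paths ``do not intersect'') is applied consistently on the $(\Rightarrow)$ side. Everything else is a direct translation of Lemma~\ref{lem:alt-cycle-main} into the language of the cut induced by a single edge in a tree.
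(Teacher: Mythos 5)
Your overall strategy---translating the condition of Lemma~\ref{lem:alt-cycle-main}, for the two positive and two negative paths of a $4$-alternating cycle, into a statement about the cut induced by a single tree edge---is exactly the route the paper takes (its ``proof'' is the one-sentence observation preceding the corollary that the other two quartet topologies fail the condition of Lemma~\ref{lem:alt-cycle-main}); your ($\Leftarrow$) direction and the cut bookkeeping in ($\Rightarrow$) are fine, as is reading ``display'' as ``satisfy''.

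The one step that does not follow as written is the parenthetical ``as there are only two of each'' in the ($\Rightarrow$) direction. With two negative and two positive paths, ``strictly more negative than positive'' also allows the profiles (one negative, zero positive) and (two negative, one positive), and counting alone does not exclude them. They are excluded by structure: the concatenation of $\p(x_0,y_0)$, $\p(y_0,x_1)$, $\p(x_1,y_1)$, $\p(y_1,x_0)$ is a closed walk in $T$, so every edge of $T$ is crossed an even number of times, i.e.\ for any fixed edge the number of positive paths plus the number of negative paths through it is even; this kills the odd profiles $(1,0)$ and $(2,1)$. (Equivalently, run your own component argument on a hypothetical $(1,0)$ or $(2,1)$ edge and you get a contradiction about which side of $T-e$ the four leaves lie on.) This is the same parity observation the paper itself uses in the appendix proof of sufficiency of Lemma~\ref{lem:alt-cycle-main} to upgrade $N\geq P+1$ to $N\geq P+2$. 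With that one sentence added, the only admissible profile is ``both negative paths, neither positive path,'' and your deduction of the quartet $x_0y_0|x_1y_1$ goes through.
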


We will denote by $RQ'(G)$ the set of \emph{required quartets} of $G$, that is 
$RQ'(G) = \{x_0y_0|x_1y_1 : (x_0, y_0, x_1, y_1)$ is an alternating cycle of $G\}$.
The only graphs on $4$ vertices that contain an alternating cycle are the $P_4$, the $2K_2$
and the $C_4$.  However, the $C_4$ contains two distinct alternating cycles: if four vertices $abcd$ in cyclic order form a $C_4$, then $(a, b, d, c)$ and 
$(d, a, c, b)$ are two alternating cycles.  The first implies the 
$ab|cd$ quartet, whereas the second implies the $ad|cb$ quartet.  This shows that no leaf power can contain a $C_4$.  Thus $RQ'(G)$ can be constructed by enumerating 
the $O(n^4)$ induced $P_4$ and $2K_2$ of $G$.
It is worth mentioning that deciding if a given set of quartets is compatible is NP-complete~\cite{steel1992complexity}.  However, $RQ'(G)$ is not \emph{any} set of quartets since it is generated from $P_4$'s and $2K_2$'s of a strongly chordal graph, and the hardness does not immediately transfer.

Now, denote by $RQ(G)$ the set of quartets that any unweighted leaf root of $G$ must contain, if it exists.
Then $RQ'(G) \subseteq RQ(G)$, and equality does not hold in general. 
Below we show how to find some of the quartets 
from $RQ(G) \setminus RQ'(G)$
(Lemma~\ref{lem:path-lemma}, which is a generalization of~\cite[Lemma 2]{nevries2016towards}).

\begin{lemma}\label{lem:path-lemma}
Let $P_1 = x_0x_1 \ldots x_p$ and $P_2 = y_0y_1 \ldots y_q$ 
be disjoint paths of $G$ (with possible chords) such that for any $0 \leq i < p$ and $0 \leq j < q$, 
$\{x_i, x_{i + 1}, y_j, y_{j + 1}\}$ are the vertices of an alternating cycle.
Then $x_0x_p|y_0y_q \in RQ(G)$.
\end{lemma}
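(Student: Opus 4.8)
The plan is to show that \emph{every} unweighted leaf root $T$ of $G$ contains the quartet $x_0x_p|y_0y_q$; since $RQ(G)$ is by definition the set of quartets forced on all such roots, this yields the claim (and if $G$ has no leaf root at all, the statement is vacuous). Throughout, for $a,b \in V(T)$ I write $\pi(a,b)$ for the vertex set of the unique path between $a$ and $b$ in $T$, and I recall that, since all of $x_0, \ldots, x_p, y_0, \ldots, y_q$ are leaves of $T$, the tree $T$ contains the quartet $ab|cd$ precisely when $\pi(a,b) \cap \pi(c,d) = \emptyset$.

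The first step extracts local information from the hypothesis. Fix any $0 \le i < p$ and $0 \le j < q$. The four vertices $\{x_i, x_{i+1}, y_j, y_{j+1}\}$ form an alternating cycle whose two edges are the path edges $x_i x_{i+1}$ and $y_j y_{j+1}$. Since $T$, being a leaf root, satisfies $(E(G), \comp{E}{G})$ with some threshold, it satisfies every alternating cycle of $G$ with that same threshold; in particular it can satisfy the $4$-alternating cycle on these four vertices. Corollary~\ref{cor:qtet} then forces $T$ to contain the quartet pairing the two edges, namely $x_i x_{i+1} \mid y_j y_{j+1}$. Equivalently, $\pi(x_i, x_{i+1}) \cap \pi(y_j, y_{j+1}) = \emptyset$ for every such pair $i, j$.

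The second step is a purely combinatorial fact about paths in a tree: $\pi(x_0, x_p) \subseteq \bigcup_{i=0}^{p-1} \pi(x_i, x_{i+1})$, and symmetrically for the $y$'s. I would prove this by edge separation: any edge $e$ on the path between $x_0$ and $x_p$ splits $T$ into a component containing $x_0$ and one containing $x_p$, so the sequence $x_0, x_1, \ldots, x_p$ must cross $e$ at some consecutive pair $x_i, x_{i+1}$, placing $e$ on $\pi(x_i, x_{i+1})$. Every internal vertex of the $x_0$-to-$x_p$ path then inherits membership from an incident path edge, while the endpoints lie on $\pi(x_0, x_1)$ and $\pi(x_{p-1}, x_p)$. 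Thus the full path between $x_0$ and $x_p$ is covered, at the level of \emph{vertices}, by the short consecutive paths.

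Combining the two steps finishes the argument. Suppose for contradiction that $\pi(x_0, x_p) \cap \pi(y_0, y_q) \neq \emptyset$, and let $w$ be a common vertex. By the containment of step two, $w \in \pi(x_i, x_{i+1})$ for some $i$ and $w \in \pi(y_j, y_{j+1})$ for some $j$, contradicting the disjointness established in step one. Hence $\pi(x_0, x_p) \cap \pi(y_0, y_q) = \emptyset$, so $T$ contains $x_0 x_p \mid y_0 y_q$, and as $T$ was an arbitrary leaf root we conclude $x_0x_p|y_0y_q \in RQ(G)$. I expect the main subtlety to be the vertex-versus-edge distinction: the quartet definition speaks of paths that do not \emph{intersect}, so the covering in step two must be carried out at the level of vertex sets rather than edge sets, which is precisely how I state and prove it above.
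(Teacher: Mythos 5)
Your proof is correct, and it reaches the conclusion by a genuinely different route than the paper. Both arguments share the same starting point: by Corollary~\ref{cor:qtet}, every unweighted leaf root $T$ of $G$ must contain each quartet $x_ix_{i+1}|y_jy_{j+1}$ (like the paper, you read the hypothesis as saying the alternating cycle pairs the two path edges together; this is harmless, since $x_ix_{i+1}$ and $y_jy_{j+1}$ are edges of $G$, so any alternating cycle on those four vertices yields one with exactly that pairing). After that, the paper stays at the level of quartet inference: it proves a closure rule stating that if a tree contains $ab|c_ic_{i+1}$ for all $0 \leq i < l$ then it contains $ab|c_0c_l$ --- justified only by a sketched insertion argument --- and applies it twice, first with $\{a,b\} = \{x_i,x_{i+1}\}$ against the chain $y_0, \ldots, y_q$ to get $x_ix_{i+1}|y_0y_q$ for each $i$, then with $\{a,b\} = \{y_0,y_q\}$ against the chain $x_0, \ldots, x_p$. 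You instead argue directly inside a fixed leaf root $T$: your covering lemma, that the vertex set of the path between $x_0$ and $x_p$ is contained in $\bigcup_{i}\pi(x_i,x_{i+1})$ (proved by edge separation), converts the pairwise disjointness of the consecutive short paths into disjointness of $\pi(x_0,x_p)$ and $\pi(y_0,y_q)$ in a single symmetric step. What each approach buys: the paper's closure rule is a compact, reusable quartet-inference statement, and the double application keeps the whole argument phrased in the language of $RQ(G)$; your path-covering argument is more elementary, avoids the iteration, and is fully rigorous precisely where the paper waves its hands (``easy to see by trying to construct such a $T$''). Your attention to vertex-level rather than edge-level intersection is also well placed, as it is exactly what the paper's definition of quartet containment requires.
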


\begin{proof}
First note that in general, if a tree $T$ contains
the quartets $ab|c_ic_{i + 1}$ for $0 \leq i < l$, 
then $T$ must contain $ab|c_0c_l$ 
(this is easy to see by trying to construct such a $T$: start with the $ab|c_0c_1$ quartet, and insert $c_2, \ldots, c_l$ in order - at each insertion, $c_i$ cannot have its neighbor on the $a - b$ path).
For any $0 \leq i < p$, we may apply this observation on $\{a,b\} = \{x_i, x_{i + 1}\}$. This yields $x_ix_{i + 1}|y_0y_q \in RQ(G)$, since $x_ix_{i + 1}|y_jy_{j + 1} \in RQ'(G)$ for every $j$.
Since this is true for every $0 \leq i < p$, 
we can apply this observation again, this time on 
$\{a, b\} = \{y_0, y_q\}$ (and the $c_i$'s being the $x_i$'s) and deduce that $y_0y_q|x_0x_p \in RQ(G)$.
\qed
\end{proof}

In particular, suppose that $G$ has two disjoint pairs of vertices $\{x_0, x_1\}$ and 
$\{y_0, y_1\}$ such that $x_0$ and $x_1$ 
(resp. $y_0$ and $y_1$) 
share a common neighbor $z$ (resp. $z'$), 
and $z \notin N(y_0) \cup N(y_1)$ (resp. $z' \notin N(x_0) \cup N(x_1)$).  Then $x_0x_1|y_0y_1 \in RQ(G)$.

In the rest of this section, we briefly explain how 
all known non-leaf powers fail to satisfy Proposition~\ref{prop:altcycle}.
We have already argued that a leaf power cannot contain a $C_4$.
As for a cycle $C_n$ with $n > 4$ and vertices $x_0, \ldots, x_{n - 1}$
in cyclic order, observe that $x_ix_{i+1}|x_{i+2}x_{i+3} \in RQ(C_n)$ since they form a $P_4$, 
for each $i \in \{0, \ldots, n - 1\}$ (indices are modulo $n$).  In this case it is not difficult to 
show that $RQ(C_n)$ is incompatible, providing an alternative explanation as to why leaf powers must be chordal.

\begin{figure*}[!b]
\begin{center}
\includegraphics[width= 1.0 \textwidth]{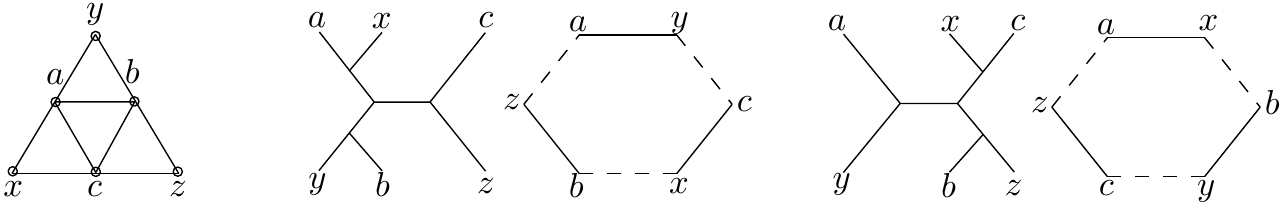}
\caption{The $3$-sun $S_3$, and the two trees that contain $RQ'(S_3) = \{ay|cz, by|cx, bz|ax\}$, 
with each tree accompanied by the alternating cycle of $S_3$ that it cannot satisfy.}\label{fig:three-sun}
\end{center}
\end{figure*}

A similar argument can be used for $S_n$, the $n$-sun, when $n \geq 4$.  If we let $x_0, \ldots, x_{n - 1}$ be the clique vertices of $S_n$ arranged in cyclic order, 
again $x_ix_{i+1}|x_{i+2}x_{i+3} \in RQ(S_n)$ for $i \in \{0, \ldots, n - 1\}$, here because of Lemma~\ref{lem:path-lemma} and the degree $2$ vertices of $S_n$.
Only $S_3$, the $3$-sun, requires an ad-hoc argument, and it is currently the only known non-leaf power
for which the set of required quartets are compatible.
Figure~\ref{fig:three-sun} illustrates how alternating cycles show that $S_3$ is not a leaf power.
There are only two trees that contain $RQ'(S_3) = \{ay|cz, by|cx, bz|ax\}$, and for both, there is an alternating cycle
such that each edge is on the same number of positive and negative paths.
We do not know if there are other examples 
for which quartets 
are not enough to discard the graph as a leaf power.
Moreover, an open question is whether for each even integer $n$, 
there exists a non-leaf power and a tree that can satisfy every alternating cycle of length 
$< n$, but not every alternating cycle of length $n$.

As for the seven strongly chordal graphs presented in~\cite{nevries2016towards},
they were shown to be non-leaf powers by arguing that $RQ(G)$ 
was not compatible (although the proof did not use the language of 
quartet compatibility).

\section{Strongly chordal graphs that are not leaf powers}\label{sec:notleafpower}

We will use a known set of (minimally) incompatible quartets as a basis for constructing our graph family.

\begin{theorem}[\cite{shutters2013incompatible}]\label{thm:incompat-shutters}
For every integers $r, q \geq 3$, the quartets $Q = \{a_ia_{i + 1}|b_jb_{j + 1} : i \in [r - 1], j \in [q - 1]\} \cup \{a_1b_1|a_rb_q\}$ are incompatible.
Moreover, any proper subset of $Q$ is compatible.
\end{theorem}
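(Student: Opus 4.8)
The plan is to treat the incompatibility of $Q$ and the minimality of $Q$ separately, the first being essentially immediate from the ``chaining'' principle already used in the proof of Lemma~\ref{lem:path-lemma} (if a tree contains $ab|c_ic_{i+1}$ for all $0 \le i < l$, it contains $ab|c_0c_l$). Suppose some tree $T$ displays all of $Q$. Fixing $i \in [r-1]$ and chaining over $j$ with $\{a,b\} = \{a_i, a_{i+1}\}$ gives $a_ia_{i+1}|b_1b_q \in \Q(T)$; then chaining over $i$ with $\{a,b\} = \{b_1, b_q\}$ gives $a_1a_r|b_1b_q \in \Q(T)$. Since $Q$ also contains $a_1b_1|a_rb_q$, and $a_1a_r|b_1b_q$ and $a_1b_1|a_rb_q$ are two distinct quartets on $\{a_1, a_r, b_1, b_q\}$, no tree can display both, a contradiction.

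For minimality, since every proper subset of $Q$ is contained in some $Q \setminus \{s\}$ and subsets of compatible sets are compatible, it suffices to exhibit, for each $s \in Q$, a tree displaying $Q \setminus \{s\}$. If $s$ is the ``diagonal'' quartet $a_1b_1|a_rb_q$, I would use the caterpillar (a tree whose internal nodes lie on a path) with leaves attached in the order $a_1, \ldots, a_r, b_1, \ldots, b_q$: in a caterpillar the quartet induced on any four leaves pairs the two that appear earliest in the order against the two that appear latest, so the four leaves $a_i, a_{i+1}, b_j, b_{j+1}$ yield exactly $a_ia_{i+1}|b_jb_{j+1}$, covering all of $Q \setminus \{s\}$.

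The substantive case is $s = a_{s_0}a_{s_0+1}|b_{t_0}b_{t_0+1}$ for some $s_0 \in [r-1]$, $t_0 \in [q-1]$. Here I would build $T$ by joining two caterpillars along a single edge $e^*$: a left caterpillar with leaf order $(b_1, \ldots, b_{t_0}, \rho, a_{s_0}, a_{s_0-1}, \ldots, a_1)$ and a right caterpillar with leaf order $(b_{t_0+1}, \ldots, b_q, \rho', a_{s_0+1}, a_{s_0+2}, \ldots, a_r)$, then delete the auxiliary leaves $\rho, \rho'$ and join the node that carried $\rho$ to the node that carried $\rho'$ by $e^*$ (each such node, of degree two after the deletion, becomes degree three again). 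The design principle is that $e^*$ then induces the bipartition $L = \{a_1, \ldots, a_{s_0}, b_1, \ldots, b_{t_0}\}$ versus $R = \{a_{s_0+1}, \ldots, a_r, b_{t_0+1}, \ldots, b_q\}$, which instantly yields the diagonal quartet $a_1b_1|a_rb_q$ (as $a_1, b_1 \in L$ and $a_r, b_q \in R$) and is arranged so that $(s_0, t_0)$ is the unique grid quartet whose two pairs straddle $e^*$ in both coordinates.

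To finish I would verify, by a short case analysis on the position of $\{a_i, a_{i+1}\}$ and $\{b_j, b_{j+1}\}$ relative to $e^*$, that every remaining grid quartet is displayed: those with one pair entirely in $L$ and the other entirely in $R$ are displayed by $e^*$ itself; those lying entirely in $L$ or entirely in $R$ are displayed inside the relevant caterpillar by the earliest-versus-latest rule; and those with exactly one of the four leaves on the far side of $e^*$ become, inside the relevant caterpillar, a quartet using $\rho$ (resp. $\rho'$) in the role of that leaf, which is again of the right shape to be displayed. This last family -- the ``borderline'' quartets $a_{s_0}a_{s_0+1}|b_jb_{j+1}$ with $j \neq t_0$ and $a_ia_{i+1}|b_{t_0}b_{t_0+1}$ with $i \neq s_0$ -- is the delicate point and the main obstacle: it is precisely where the exact leaf orders matter (keeping all the $b$'s of each side contiguous and placing $\rho, \rho'$ between the $b$-block and the $a$-block), and a more naive ordering fails exactly there. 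Note there is no need to check separately that $T$ omits $(s_0, t_0)$, since if it also displayed that quartet it would display all of $Q$, contradicting the first part.
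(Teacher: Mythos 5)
Your proposal is correct, but note that the paper does not prove this statement at all: Theorem~\ref{thm:incompat-shutters} is imported verbatim from Shutters et al.~\cite{shutters2013incompatible}, so there is no in-paper proof to compare against. What you have written is a self-contained proof of the cited result. The incompatibility half is essentially the same chaining argument the paper deploys in Lemma~\ref{lem:path-lemma} and in the proof that $\grq{r}{q}$ is not a leaf power (chain over $j$ to get $a_ia_{i+1}|b_1b_q$, chain over $i$ to get $a_1a_r|b_1b_q$, and observe this clashes with the diagonal quartet $a_1b_1|a_rb_q$ on the same four leaves); that part is sound. The minimality half is genuinely new content relative to this paper, and your two-caterpillar construction checks out: the edge $e^*$ induces the split $\{a_1,\ldots,a_{s_0},b_1,\ldots,b_{t_0}\}$ versus the rest, which displays the diagonal quartet and every grid quartet whose two pairs lie on opposite sides; same-side grid quartets follow from the caterpillar order (with the $b$-block preceding the reversed or forward $a$-block); and the borderline quartets reduce, by the standard restriction argument, to quartets of the form $b_jb_{j+1}|\rho\, a_{s_0}$ or $b_{t_0}\rho|a_{i+1}a_i$ inside a single caterpillar, which your leaf orders do display precisely because $\rho$ (resp.\ $\rho'$) sits between the $b$-block and the $a$-block. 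The one case you rightly do not need to check is $(i,j)=(s_0,t_0)$, where $e^*$ forces the wrong topology $a_{s_0}b_{t_0}|a_{s_0+1}b_{t_0+1}$. The only caveat is that the final case analysis is sketched rather than executed, but the delicate cases are correctly identified and the construction handles them, including the degenerate small caterpillars arising when $s_0=1$ or $t_0=1$.
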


\begin{figure*}[!t]
\begin{center}
\includegraphics[width= 1.0 \textwidth]{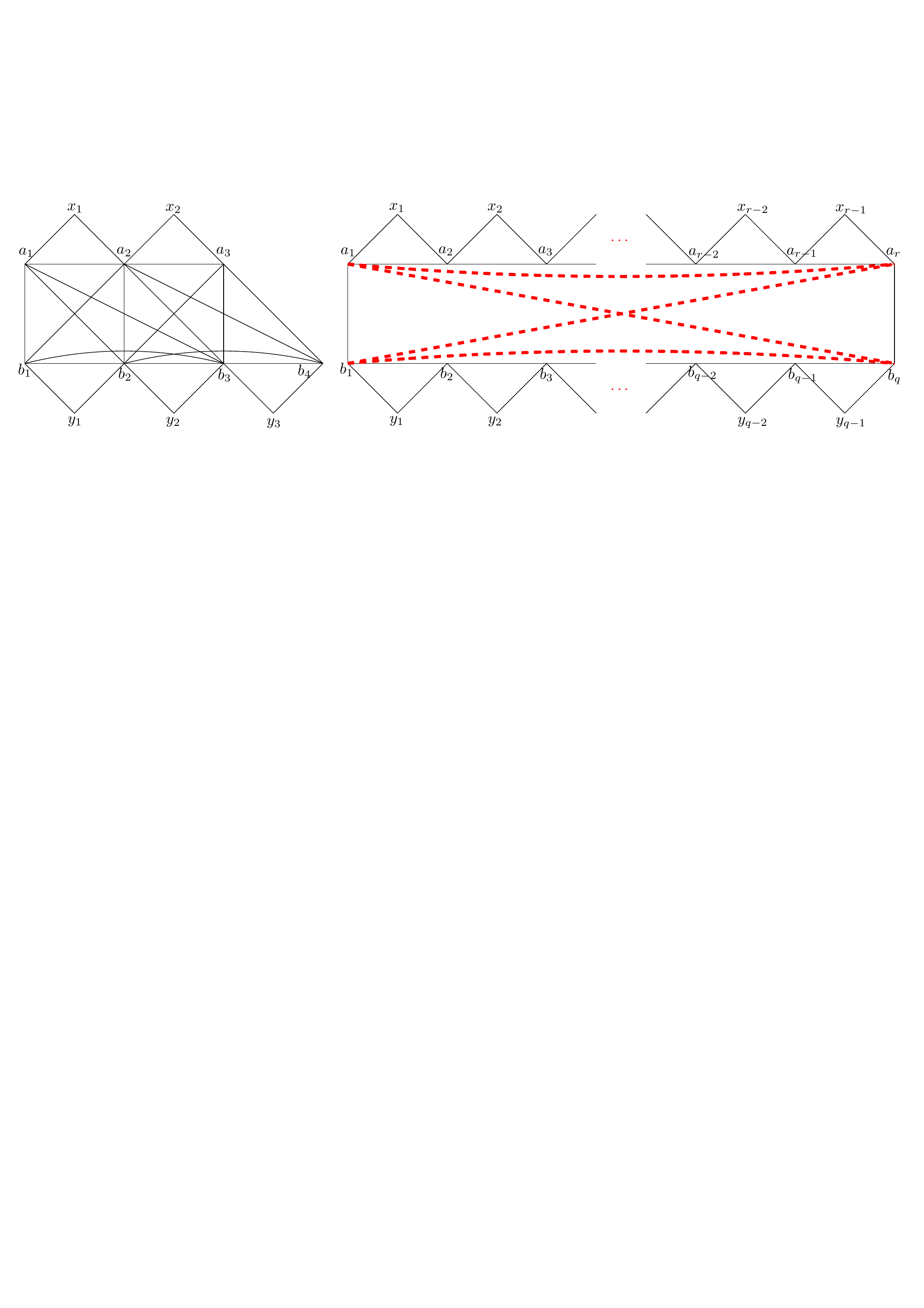}
\caption{The graph $\grq{3}{4}$ on the left, followed by its generalization $\grq{r}{q}$ on the right.  In the latter, all edges between the $a_i$'s and $b_i$'s are present, except the non-edges depicted by red dashed lines.}\label{fig:g_rq}
\end{center}
\end{figure*}

We now construct the family $\{\grq{r}{q} : r, q \geq 3\}$ of minimal strongly chordal graphs
that are not leaf powers.  The idea is to simply enforce that 
$RQ(\grq{r}{q})$ contains all the quartets of $Q$ in  Theorem~\ref{thm:incompat-shutters}.
Figure~\ref{fig:g_rq} illustrate the graph $\grq{3}{4}$ and a general representation of $\grq{r}{q}$.
For integers $r, q \geq 3$, $\grq{r}{q}$ is as follows:
start with a clique of size $r + q$, partition its vertices into two disjoint sets 
$A = \{a_1, \ldots a_r\}$ and 
$B = \{b_1, \ldots, b_q\}$, and remove 
the edges $a_1a_r, a_1b_q, b_1b_q$ and $b_1a_r$.
Then for each $i \in [r - 1]$ insert a node $x_i$
that is a neighbor of $a_i$ and $a_{i + 1}$, 
and for each $i \in [q - 1]$, insert another node $y_i$ that is a neighbor
of $b_i$ and $b_{i + 1}$.

We note that in~\cite{nevries2016towards}, the graph $\grq{3}{3}$ was one of the seven graphs shown to be a strongly chordal non-leaf power.  Hence $\grq{r}{q}$ can be seen as a generalization of this example.
It is possible that the other examples of~\cite{nevries2016towards} can also be generalized.
 
\begin{theorem}
For any integers $r,q \geq 3$, the graph $\grq{r}{q}$ is strongly chordal, is not a leaf power and for any $v \in V(\grq{r}{q})$, $\grq{r}{q} - v$ is a leaf power.
\end{theorem}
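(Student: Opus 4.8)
\emph{Strong chordality.} I would produce a simple elimination ordering and invoke Farber's characterisation. The order is: the link vertices $x_1, \ldots, x_{r-1}$ (in this order), then $y_1, \ldots, y_{q-1}$, then $a_1, b_1, a_r, b_q$, then the remaining vertices $\{a_2, \ldots, a_{r-1}, b_2, \ldots, b_{q-1}\}$ arbitrarily. When $x_i$ is removed, its only neighbours $a_i, a_{i+1}$ are adjacent, and in the current graph $N[x_i] = \{x_i, a_i, a_{i+1}\}$ is contained in both $N[a_i]$ and $N[a_{i+1}]$, while one of $N[a_i], N[a_{i+1}]$ is contained in the other (whichever of $a_i, a_{i+1}$ is ``more interior'' dominates); hence $x_i$ is simple, and the $y_j$ step is symmetric. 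Once all links are gone the graph is $\grq{r}{q}[A \cup B]$, which is the complement of a $C_4$ on $\{a_1, a_r, b_1, b_q\}$ together with $r+q-4$ isolated vertices, i.e.\ $2K_2 \vee K_{r+q-4}$; there $a_1, b_1, a_r, b_q$ followed by the leftover clique is simple by the same kind of neighbourhood comparison. Each individual check is a one-line inclusion of neighbourhoods.

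\emph{Not a leaf power.} By Proposition~\ref{prop:must-display-cycles}, Corollary~\ref{cor:qtet} and Lemma~\ref{lem:path-lemma}, a leaf power has an unweighted leaf root containing every quartet of $RQ(G)$; so it suffices to prove $RQ(\grq{r}{q})$ incompatible, and for that it is enough to check $Q \subseteq RQ(\grq{r}{q})$ for the minimally incompatible set $Q$ of Theorem~\ref{thm:incompat-shutters} with the same $r, q$. For each $i \in [r-1]$ and $j \in [q-1]$, the disjoint pairs $\{a_i, a_{i+1}\}$ and $\{b_j, b_{j+1}\}$ have the private common neighbours $x_i$ and $y_j$ (adjacent to nothing outside their pair), so the remark after Lemma~\ref{lem:path-lemma} gives $a_i a_{i+1} | b_j b_{j+1} \in RQ(\grq{r}{q})$. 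Moreover $a_1 b_1, a_r b_q \in E(\grq{r}{q})$ while $b_1 a_r, b_q a_1 \notin E(\grq{r}{q})$, so $(a_1, b_1, a_r, b_q)$ is a $4$-alternating cycle and Corollary~\ref{cor:qtet} yields $a_1 b_1 | a_r b_q \in RQ'(\grq{r}{q}) \subseteq RQ(\grq{r}{q})$. Hence $Q \subseteq RQ(\grq{r}{q})$, so $RQ(\grq{r}{q})$ is incompatible and $\grq{r}{q}$ is not a leaf power.

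\emph{Minimality: reductions.} For each $v$ I must exhibit a weighted tree that is a leaf root of $\grq{r}{q} - v$ (zero-weight edges are permitted, by Lemma~\ref{lem:zero-edges}). Up to the symmetry exchanging the $a$-side with the $b$-side and reversing indices, it suffices to treat $v = x_i$, $v = a_1$, and $v = a_k$ for interior $k \in \{2, \ldots, r-1\}$. The last two reduce to the first type: deleting $a_1$ makes $x_1$ a degree-$1$ vertex, and deleting an interior $a_k$ makes $x_{k-1}$ and $x_k$ degree-$1$, so by Lemma~\ref{lem:degone-dominating} it is enough to leaf-root $\grq{r}{q} - \{a_1, x_1\}$, resp.\ $\grq{r}{q} - \{a_k, x_{k-1}, x_k\}$; after relabelling, each of these is essentially a smaller member of the family with one link vertex and one or two boundary non-edges suppressed, covered by the construction for $v = x_i$.

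\emph{Minimality: the core construction, and the main obstacle.} It remains to build a leaf root of $\grq{r}{q} - x_i$. The point of deleting $x_i$ is that the quartet $a_i a_{i+1} | b_j b_{j+1}$ is no longer required, so $\{a_1, \ldots, a_i\}$ need not lie on the same side as $\{a_{i+1}, \ldots, a_r\}$ of the tree edge separating $A$ from $B$ --- precisely the slack Theorem~\ref{thm:incompat-shutters} says one needs. Concretely I would use a tree consisting of a short central path with two ``combs'': the interior $a$-vertices as teeth of one comb, with each surviving link $x_{i'}$ placed on the tooth edge between $a_{i'}$ and $a_{i'+1}$ (hence within the threshold of exactly $a_{i'}$ and $a_{i'+1}$), and the interior $b$-vertices as teeth of the other comb; the pair $\{a_1, b_1\}$ placed as twins deep on one side and $\{a_r, b_q\}$ as twins deep on the other (so that $a_1 \sim b_1$, $a_r \sim b_q$, the quartet $a_1 b_1 | a_r b_q$ is displayed, and $\{a_1, b_1\}$ is far from $\{a_r, b_q\}$); and the $a$-comb hung so that its $i$-block and $(i{+}1)$-block meet only past the $A$/$B$ separator. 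Choosing the integer weighting is then a bounded feasibility problem --- every edge at distance at most the threshold, every non-edge strictly above --- solved by taking the threshold large and scaling, with slightly different scales on the two combs so that $a_1 \not\sim a_r$ and $b_1 \not\sim b_q$ hold simultaneously even when $r \ne q$. The main obstacle is exactly this: committing to one tree topology and one weighting that realise, uniformly in $r, q \ge 3$, the near-clique on $A \cup B$ (missing precisely the four designated non-edges), the path-like refinement separating each link vertex from all of its non-neighbours, and the four long-range non-adjacencies, and then verifying that every pair lands on the correct side of the threshold.
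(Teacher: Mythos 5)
Your treatment of strong chordality and of non-leaf-power-ness is correct and coincides with the paper's: the same simple elimination ordering, and the same verification that the minimally incompatible set $Q$ of Theorem~\ref{thm:incompat-shutters} sits inside $RQ(\grq{r}{q})$ via Lemma~\ref{lem:path-lemma} (applied to the paths $a_ix_ia_{i+1}$ and $b_jy_jb_{j+1}$) and the $2K_2$ on $\{a_1,b_1,a_r,b_q\}$. The problem is the minimality claim, and you have correctly diagnosed it yourself: everything reduces to producing an explicit leaf root of $\grq{r}{q}-x_i$, and you stop at a topological sketch (``a short central path with two combs'') plus the assertion that an integer weighting exists ``by taking the threshold large and scaling.'' That assertion is precisely the non-trivial content. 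The paper's tree is indeed a caterpillar of the shape you describe, but making it work requires equalizing the total weight of three spine segments of \emph{different combinatorial lengths} ($2i-1$, $2q-3$ and $2r-2i-1$ edges) to a common value $p$; the paper achieves this by setting $p:=2(2i-1)(2r-2i-1)(2q-3)$, $p_1:=p/(2i-1)$, $p_2:=p/(2q-3)$, $p_3:=p/(2r-2i-1)$, threshold $2p$, and then checking every pair: the near-clique on $A\cup B$, the four corner non-edges, and the pendant weights $2p-2p_1$, $2p-2p_3$, $2p-2p_2$ (with ad-hoc adjustments at $y_{q-2}$, $y_{q-1}$ and a zero-weight edge handled by Lemma~\ref{lem:zero-edges}) so that each $x_j$ and $y_j$ reaches exactly its two neighbours. ``Scaling the two combs differently'' does not by itself deliver this; a uniform-in-$r,q,i$ weighting has to be exhibited and verified, and until it is, the claim that $\grq{r}{q}-x_i$ is a leaf power is unproven.

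A secondary issue is your reduction of the $v\in A\cup B$ cases. Deleting an interior $a_k$ and then the two pendant links gives $\grq{r}{q}-\{a_k,x_{k-1},x_k\}$, which you call ``essentially a smaller member of the family''; this fails at the boundary (for $r=3$, $k=2$ the $a$-side degenerates below the family's range $r\geq 3$), and deleting a corner such as $a_1$ changes which of the four long-range non-edges survive, so the result is genuinely not $\grq{r'}{q'}$ minus a link. The paper's route is cleaner and avoids all of this: $\grq{r}{q}-a_i$ is a leaf power iff $\grq{r}{q}-a_i-x_i$ is (Lemma~\ref{lem:degone-dominating}), and the latter is an induced subgraph of $\grq{r}{q}-x_i$; since the leaf power property is hereditary, the single construction for $\grq{r}{q}-x_i$ covers every vertex deletion at once. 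I recommend adopting that reduction and then supplying the explicit weighting.
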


\begin{proof}
One can check that $\grq{r}{q}$ is strongly chordal by the simple elimination ordering:
$x_1, x_2, \ldots, x_{r - 1}, y_1, \ldots, y_{q- 1}, a_1, b_1, a_r, b_q,a_2, \ldots, a_{r - 1}, b_2, \ldots, b_{q - 1}$.

To see that $\grq{r}{q}$ is not a leaf power, we note that 
the incompatible set of quartets of Theorem~\ref{thm:incompat-shutters} is a subset of 
$RQ(\grq{r}{q})$:
$a_ia_{i+1}|b_jb_{j+1} \in R(\grq{r}{q})$ by Lemma~\ref{lem:path-lemma} and the paths 
$a_ix_ia_{i + 1}$ and $b_jy_jb_{j+1}$, 
and $a_1b_1|a_rb_q \in RQ(\grq{r}{q})$ since they induce a $2K_2$.

We now show that for any $v \in V(\grq{r}{q})$, $\grq{r}{q} - v$ is a leaf power.  
First suppose that $v \in A \cup B$, say $v = a_i$ without loss of generality.  
Then in $\grq{r}{q} - a_i$, $x_i$ (or take $x_{i - 1}$ if $i = r$) has degree one, and so by Lemma~\ref{lem:degone-dominating}, 
$\grq{r}{q} - a_i$ is a leaf power if and only if $\grq{r}{q} - a_i - x_i$ is a leaf power.
Therefore, it suffices to show that $\grq{r}{q} - x_i$ is a leaf power.
We may thus assume that $v = x_i$ for some $i$ (the $v = y_i$ case is the same by symmetry).

\begin{figure*}[!t]
\begin{center}
\includegraphics[width= 1.0 \textwidth]{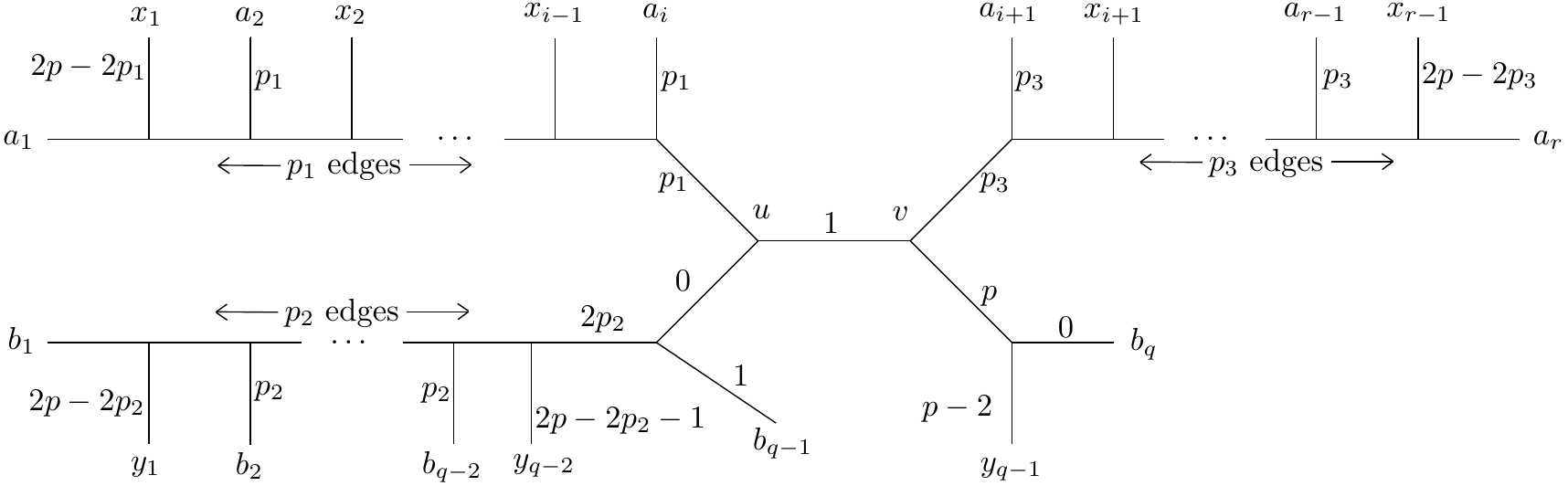}
\caption{A leaf root of $\grq{r}{q} - x_i$.}\label{fig:labeling_rq}
\end{center}
\end{figure*}

Figure~\ref{fig:labeling_rq} exhibits a leaf root $(T, f)$ for $\grq{r}{q} - x_i$ (the weighting contains $0$ edges, but this can be handled by Lemma~\ref{lem:zero-edges}).  
In the weighting $f$, the edges take values depending on variables $p, p_1, p_2, p_3$ which are defined as follows:  
\begin{align*}
p := 2(2i - 1) (2r - 2i - 1) (2q  - 3)  & \>\>\>\>\>\> p_1 := p/(2i - 1)
 \\
p_2 := p/(2q - 3) & \>\>\>\>\>\> p_3 := p/(2r - 2i - 1)
\end{align*}

and we set the threshold $k := 2p$.
Each edge on the $a_1 - u, b_1 - u$ and $a_r - v$ path 
is weighted by $p_1, p_2$ and $p_3$ respectively, with the exception 
of the last two edges of the $b_1 - u$ path where one edge has weight $0$ and the other $2p_2$.
One can check that this ensures that $f(\p(a_1, u)) = f(\p(b_1, u)) = f(\p(a_r, v)) = p$, 
($p_1, p_2$ and $p_3$ are chosen so as to distribute a total weight of $p$ across these paths, and $p$ is such that these values are integers).
Moreover, $p_1,p_2,p_3 > 2$.
Observe that if $i = 1$, then the $a_1 - u$ path is a single edge and $p_1 = p$, and if $i = r - 1$, the $a_r - v$ path is a single edge and $p_3 = p$.
It is not hard to verify that $(T, f)$ satisfies the subgraph of $G - x_i$ induced by the $a_j$'s and $b_j$'s (since each pair of vertices has distance at most $2p$, except $a_1a_r, a_1b_q, b_1a_r$ and $b_1b_q$).

Now for the $x_j$'s and $y_j$'s.
For each $j \in [r - 1] \setminus \{i\}$, the edge $e$ incident to $x_j$ has $f(e) = 2p - 2p_1$ if $j < i$ and $f(e) = 2p - 2p_3$ if $j > i$.  For $j \in [q - 1]$, the edge $e$ incident to $y_j$ 
has $f(e) = 2p - 2p_2$ if $j \leq q - 3$, $f(e) = 2p - 2p_2 - 1$ if $j = q - 2$ and $f(e) = p - 2$ if $j = q - 1$.
Each $x_j$ is easily seen to be satisfied, as the only vertices of $T$ within distance $2p$ of $x_j$ are $a_j$ and $a_{j+1}$.
This is equally easy to see for the $y_j$ vertices, with the exception of $y_{q - 1}$. 
In $(T, f)$, $y_{q - 1}$ can reach $b_q$ and $b_{q - 1}$ within distance $2p$ as required, but we must argue that it cannot reach $a_i$ nor $a_{i + 1}$ (which is enough, since all the other leaves are farther from $y_{q - 1}$.
But this follows from that fact that $p_1, p_3 > 2$.
This shows that $(T, f)$ is a leaf root of $\grq{r}{q} - x_i$, and concludes the proof.
\qed
\end{proof}

Interestingly, the $\grq{r}{q}$ graphs might be subject to various alterations in order to
obtain different families of strongly chordal non-leaf powers.
One example of such an alteration of $\grq{r}{q}$ is to pick 
some $j \in \{2, \ldots, r - 2\}$
and remove the edges $\{a_ib_q : 2 \leq i \leq j\}\}$.
One can verify that the resulting graph is still strongly chordal, but
requires the same set of incompatible quartets as $\grq{r}{q}$.

\vspace{-3mm}

\section{Hardness of finding $\grq{r}{q}$ in chordal graphs}\label{sec:hardness}

\vspace{-2mm}

We show that deciding if a chordal graph contains an induced subgraph isomorphic to $\grq{r}{q}$ for some $r, q \geq 3$ is NP-complete.
We reduce from the following:

\vspace{2mm}

\noindent The \textsf{Restricted Chordless Cycle} (\indcycle) problem: \\
\noindent {\bf Input}: a bipartite graph $G = (U \mathbin{\mathaccent\cdot\cup} V, E)$, and two vertices $s, t \in V(G)$ 
such that $s, t \in U$, both $s$ and $t$ are of degree $2$ and they share no common neighbor. \\
\noindent {\bf Question}:  does there exist a chordless cycle in $G$ containing both $s$ and $t$?\\

\vspace{-2mm}

The \indcycle~problem is shown to be NP-hard in~\cite[Theorem 2.2]{diot2014detecting}\footnote{Strictly speaking, the problem asks if there exists a chordless cycle with both $s$ and $t$ \emph{of size at least $k$}.  However, 
in the graph constructed for the reduction, any chordless cycle containing $s$ and $t$ has size at least $k$ if it exists - therefore the question of existence is hard.  Also, $s$ and $t$ are not required to be in the same part of the bipartition, but again, 
this is allowable by subdividing an edge incident to $s$ or $t$.}.
We first need some notation.
If $P$ is a path between vertices $u$ and $v$, 
we call $u$ and $v$ its \emph{endpoints}, and 
the other vertices are \emph{internal}.
Two paths $P_1$ and $P_2$ of a graph $G$ are said \emph{independent} if 
$P_1$ and $P_2$ are chordless, 
do not share any vertex except perhaps their endpoints, and for any internal vertices $x$ in $P_1$ and $y$ in $P_2$, $xy \notin E(G)$.
Observe that there is a chordless cycle containing $s$ and $t$
if and only if there exist two independent paths $P_1$ and $P_2$ 
between $s$ and $t$.

From a \indcycle~instance $(G,s,t)$ 
we construct a graph $H$ for the problem of deciding if $H$ contains an induced copy of $\grq{r}{q}$.
Figure~\ref{fig:reduction} illustrates the construction.

Let $V(H) = \{s_1, t_1, s_2, t_2\} \cup X_U \cup X_V$, 
where $X_U = \{u' : u \in U \setminus \{s, t\}\}$ and 
$X_V = \{v' : v \in V\}$.
Denote $X_U^* = X_U \cup \{s_1, t_1, s_2, t_2\}$. 
For $E(H)$, add an edge between every two vertices of $X_U^*$
\emph{except} the edges $s_1t_1, s_1t_2, s_2t_1,s_2t_2$.
Moreover, we add an edge between $u' \in X_U$ and $v' \in X_V$  if and only if $uv \in E(G)$. Let $\{c_1,c_2\} = N(s)$ and $\{d_1, d_2\} = N(t)$. 
Then add edges $s_1c_1'$ and $s_2c_2'$, 
and add the edges $t_1d_1'$ and $t_2d_2'$. 
Notice that $X_V$ forms an independent set.

\begin{figure*}[!t]
\begin{center}
\includegraphics[width= 0.8 \textwidth]{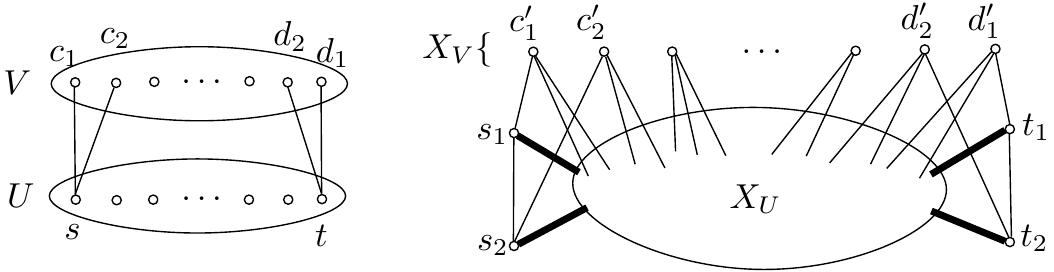}
\caption{An illustration of the reduction: $G$ is on the left (only edges incident to $s$ and $t$ are drawn), $H$ is on the right (thick edges mean that every possible edge is present).}\label{fig:reduction}
\end{center}
\end{figure*}

We claim that  $H$ is chordal.
Note that each vertex $v$ of $X_V$ is simplicial,
since $N(v)$ consists of vertices from $X_U$ and at most
one of $\{s_1, t_1, s_2, t_2\}$ (since $s$ and $t$ have no common neighbor).  Moreover, $H - X_V$ 
is easily seen to be chordal, and it follows that $H$ admits a perfect 
elimination ordering.  

\begin{theorem}
Deciding if a graph $H$ contains a  copy of $\grq{r}{q}$ for some $r, q \geq 3$ is NP-complete, even if $H$ is restricted to the class of chordal graphs.
\end{theorem}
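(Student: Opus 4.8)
The plan is a Karp reduction from \indcycle{}, with instance the chordal graph $H$ constructed above. Membership in NP is clear: a candidate induced copy of $\grq{r}{q}$ --- presented as a vertex subset of $H$ with a bijection onto $V(\grq{r}{q})$, the parameters $r,q$ being bounded by $|V(H)|$ --- is verified in polynomial time. So it remains to prove that $H$ has an induced copy of $\grq{r}{q}$ for some $r,q\ge 3$ if and only if $G$ has a chordless cycle through $s$ and $t$, equivalently --- by the observation preceding the construction --- two independent $s$--$t$ paths $P_1,P_2$.

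For the ``if'' direction I would start from independent $s$--$t$ paths $P_1,P_2$. Since $G$ is bipartite, $s,t\in U$, and $s,t$ share no neighbour, each $P_i$ is chordless of even length $\ge 4$; writing these lengths as $2(r-1)$ and $2(q-1)$ gives $r,q\ge 3$. Exchanging $P_1,P_2$ if needed, assume the $V$-vertex of $P_1$ adjacent to $s$ is $c_1$ (so that of $P_2$ is $c_2$), and let $d_\ell,d_{\ell'}$ (with $\{\ell,\ell'\}=\{1,2\}$) be the $V$-vertices of $P_1,P_2$ adjacent to $t$. I then define the copy by the natural dictionary: the endpoints and internal $U$-vertices of $P_1$ play $a_1=s_1$, $a_r=t_\ell$, $a_2,\dots,a_{r-1}\in X_U$; the $V$-vertices of $P_1$ play $x_1=c_1'$, $x_{r-1}=d_\ell'$, $x_2,\dots,x_{r-2}\in X_V$; symmetrically $P_2$ gives $b_1=s_2$, $b_q=t_{\ell'}$, the remaining $b_j\in X_U$ and the $y_j\in X_V$. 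These images are distinct by independence of $P_1,P_2$, and checking that they induce $\grq{r}{q}$ is routine: the edges required inside $A\cup B$ come from the near-clique $X_U^*$ plus the attachment edges $s_1c_1',s_2c_2',t_1d_1',t_2d_2'$; the four missing edges among $\{a_1,b_1,a_r,b_q\}$ come from the induced $2K_2$ on $\{s_1,s_2,t_1,t_2\}$; and each $x_i$ (and $y_j$) gets exactly its two prescribed neighbours precisely because $P_1,P_2$ are chordless (a $V$-vertex of a spine attaches only to its two consecutive $U$-vertices on that spine) and independent (no edge runs across the two spines).

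The main obstacle is the ``only if'' direction, where I must show that \emph{every} induced copy of $\grq{r}{q}$ in $H$ is placed canonically. Write $W$ for the part of the copy playing $A\cup B$. I would prove, in order: (i) $|W\cap X_V|\le 2$, since $W\setminus\{a_r,b_q\}$ and $W\setminus\{a_1,b_1\}$ are cliques of size $\ge r+q-2\ge 4$ while $X_V$ is independent; (ii) no $x_i$ or $y_j$ lies in $X_U$, since such a vertex would be $H$-adjacent to all of $W\cap X_U^*$, which by (i) has $\ge r+q-2\ge 4$ vertices, whereas $x_i,y_j$ have degree $2$ in $\grq{r}{q}$; (iii) no $x_i$ or $y_j$ equals one of $s_1,s_2,t_1,t_2$ --- here the key point is a counting bound: if, say, $x_i=s_1$ (the others are symmetric), then all $2r+2q-5$ copy-vertices outside $\{x_i,a_i,a_{i+1}\}$ are non-neighbours of $s_1$ in $H$, hence lie in $\{t_1,t_2\}\cup X_V$, so at least $2r+2q-7$ copy-vertices lie in the independent set $X_V$; but $\grq{r}{q}$ has independence number $r+q-2$, and $r+q-2<2r+2q-7$ whenever $r+q\ge 6$, a contradiction; (iv) therefore all $x_i,y_j\in X_V$, whence $W\cap X_V=\emptyset$ --- a vertex of $W\cap X_V$ has a $\grq{r}{q}$-neighbour among the $x_i,y_j$, all in $X_V$, contradicting independence. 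Thus $W\subseteq X_U^*$; since the four non-edges of $\grq{r}{q}$ inside $W$ form the complete bipartite graph between $\{a_1,b_1\}$ and $\{a_r,b_q\}$ while the only non-edges inside $X_U^*$ join $\{s_1,s_2\}$ to $\{t_1,t_2\}$, we get $\{a_1,b_1\}\cup\{a_r,b_q\}=\{s_1,s_2,t_1,t_2\}$ with $\{a_1,b_1\}$ equal to $\{s_1,s_2\}$ or $\{t_1,t_2\}$. Hence the remaining $a_i,b_j$ lie in $X_U$, and $x_1,x_{r-1},y_1,y_{q-1}$, being the $X_V$-neighbours of $a_1,a_r,b_1,b_q$ respectively, are forced to be among $c_1',c_2',d_1',d_2'$.

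Finally, I would decode: reading the $A$-spine $a_1,x_1,a_2,\dots,x_{r-1},a_r$ back into $G$ (replacing the two endpoints by $s,t$ and each $z'$ by $z$) yields an $s$--$t$ walk that alternates between $U$ and $V$, hence a path $Q_A$, which is chordless because a chord would be an edge of $G$ corresponding to an edge of $H$ between copy-vertices non-adjacent in $\grq{r}{q}$, impossible since the copy is induced. The $B$-spine yields a chordless $s$--$t$ path $Q_B$ likewise. Since $Q_A,Q_B$ occupy disjoint and pairwise non-adjacent portions of the induced copy (apart from $s,t$) and leave $s$, resp.\ enter $t$, through the two distinct vertices of $N(s)$, resp.\ $N(t)$, they are independent, so $G$ has a chordless cycle through $s$ and $t$. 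As $H$ is chordal and constructible in polynomial time, this shows the problem is NP-complete even when restricted to chordal graphs. \qed
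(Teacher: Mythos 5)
Your proof is correct and uses the same reduction and the same forward direction as the paper; the difference lies in how you establish rigidity of an induced copy in the ``only if'' direction. The paper starts from the corners: it shows by local case analysis that $\{a_1,b_1\}$ cannot meet $X_U$ (else $a_r,b_q$ would both land in the independent set $X_V$ yet be adjacent) nor $X_V$ (else the adjacent $x_1$ would be forced into $X_U$ and pick up a forbidden edge to $b_1$), concludes $\{a_1,b_1,a_r,b_q\}=\{s_1,s_2,t_1,t_2\}$, and only then places the remaining $a_i,b_j$ in $X_U$ and the $x_i,y_j$ in $X_V$. You invert this order and replace the local arguments by global counting: clique-versus-independent-set bounds give $|W\cap X_V|\le 2$, a degree count excludes $x_i,y_j\in X_U$, and a comparison of $2r+2q-7$ against the independence number $r+q-2$ of $\grq{r}{q}$ excludes $x_i,y_j\in\{s_1,s_2,t_1,t_2\}$. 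This is a genuinely different and arguably more robust route (it does not depend on which corner is adjacent to which attachment vertex), at the cost of needing the independence-number claim, which you assert without proof; it does hold (an independent set meets $A\cup B$ in at most two vertices, and each such vertex excludes a distinct $x_i$ or $y_j$), but a line of verification is needed, and note that the cruder bound $r+q$ would not suffice when $r=q=3$. One phrase in your final decoding step is literally false: $Q_A$ and $Q_B$ do \emph{not} occupy ``pairwise non-adjacent portions of the induced copy,'' since the images of the $a_i$'s and $b_j$'s sit inside the near-clique $X_U^*$ and are heavily interconnected in $H$. Chordlessness of each spine and independence of the two paths instead follow from combining bipartiteness of $G$ (which rules out any $U$--$U$ or $V$--$V$ chord or cross-edge, in particular all the $a_ib_j$ adjacencies) with inducedness of the copy (which rules out the $U$--$V$ pairs, since $a_ix_k$, $a_iy_k$, $b_jx_k$, $b_jy_k$ are non-edges of $\grq{r}{q}$ for non-consecutive indices). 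You invoke bipartiteness when turning the spine into a path, so the fix is a sentence, but as written this step's justification is wrong.
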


\begin{proof}
The problem is in NP, since a subset $I \subseteq V(H)$, along with the labeling of $I$ by the $a_i$, $b_i$, $x_i$ and $y_i$'s of a $\grq{r}{q}$ can serve as a certificate.
As for hardness, let $G$ be a graph and $H$ the corresponding graph constructed as above.
We claim that $G$ contains two independent paths $P_1$ and $P_2$ between $s$ and $t$ 
if and only if $H$ contains a copy of $\grq{r}{q}$ for some $r, q \geq 3$.
The idea is that $s_1, s_2$ (resp. $t_1, t_2$) correspond to the $a_1, b_1$  (resp. $a_r, b_q$) vertices of $\grq{r}{q}$, while the $P_1$ and $P_2$ paths give the other vertices.  The $x_i$ and $y_i$'s are in $X_V$, and the $a_i$ and $b_i$'s in $X_U^*$.

($\Rightarrow$)  Let $P_1$ and $P_2$ be two independent 
paths between $s$ and $t$.  Note that both paths alternate between $U$ and $V$,
Let $P_1 = (s = u_1,v_1, u_2, v_2, \ldots, v_{r - 1}, u_r = t)$ and $P_2 = (s = w_1,z_1, w_2, z_2, \ldots, z_{q - 1}, w_q = t)$. 
Note that since $P_1$ and $P_2$ are independent, 
every vertex of $G[V(P_1) \cup V(P_2)]$ has degree exactly $2$.

We show that the set of vertices 
$I = \{s_1,t_1,s_2,t_2\} \cup \{x' : x \in V(P_1) \cup V(P_2) \setminus \{s, t\}\}$
forms a $\grq{r}{q}$.
Denote $I_U = I \cap X_U$ and $I_V = I \cap X_V$.
First observe that $\{s_1,t_1,s_2,t_2\} \cup I_U$ forms a clique, but minus the edges $\{s_1t_1, s_1t_2, s_2t_1, s_2t_2\}$.
Hence $\{s_1, s_2\}$ will correspond to the vertices $\{a_1, b_1\}$ of $\grq{r}{q}$, and $\{t_1, t_2\}$ to $\{a_r, b_q\}$, and it remains to find the degree two vertices around this ``almost-clique''.
Observe that $\{v_1, z_1\} = \{c_1, c_2\}$ and $\{v_{r - 1}, z_{q - 1}\} = \{d_1, d_2\}$.
Let $c_{i_1} = v_1, c_{i_2} = z_1$ and $d_{j_1} = v_{r - 1}, d_{j_2} = z_{q - 1}$, 
with $\{i_1, i_2\} = \{j_1, j_2\} = \{1, 2\}$.
In $H$, the vertex sequence $(s_{i_1}, u_2', \ldots, u_{r - 1}', t_{j_1})$ forms a path in $G[I]$ in which every two consecutive vertices share a common neighbor, which lies in $I_V$.
Namely, $s_{i_1}$ and $u_2'$ share $v_1' = c_{i_1}'$, $u_i',u_{i + 1}'$ share $v_i'$, and $u_{r - 1}', t_{j_1}$ share $v_{r - 1}' = d_{j_1}'$.
The same property holds for the consecutive vertices of the path
$(s_{i_2}, w_2', \ldots, w_{q - 1}', t_{i_2})$.
Note that these two paths are disjoint in $H$ and partition $I_U$.
Moreover, by construction each $x' \in I_V$ is a shared 
vertex for some pair of consecutive vertices, i.e. $x'$ has at least two neighbors in $I$. 

Therefore, it only remains to show that if $x' \in I_V$, then $x'$ has only two neighbors in $I$.
Suppose instead that $x'$ has at least $3$ neighbors in $I$, say $y_1', y_2', y_3'$.  Note that all three lie in $X_U^*$.
We must have $|\{s_1, s_2\} \cap \{y_1', y_2', y_3'\}| \leq 1$, 
since $s_1$ and $s_2$ share no neighbor in $X_V$.
Likewise, $|\{t_1, t_2\} \cap \{y_1', y_2', y_3'\}| \leq 1$.
This implies that $y_1', y_2', y_3'$ are vertices corresponding to three distinct vertices of $G$, say $y_1, y_2$ and $y_3$.
Then $x$ is a neighbor of $y_1, y_2, y_3$ and 
since, by construction, $x, y_1, y_2, y_3 \in V(P_1) \cup V(P_2)$, 
this contradicts that $G[V(P_1) \cup V(P_2)]$ has maximum degree $2$.

($\Leftarrow$) Suppose there is $I \subseteq V(H)$ such that $H[I]$ is isomorphic to $\grq{r}{q}$ for some $r, q \geq 3$.   
Add a label to the vertices of $I$ as in Figure~\ref{fig:g_rq} 
(i.e. we assume that we know where the $a_i$'s, $b_i$'s, $x_i$'s and $y_i$'s are in $I$).
We first show that $a_1, b_1, a_r, b_q$, which we will call the \emph{corner} vertices,  
are $s_1, s_2, t_1, t_2$.
If one of $a_1$ or $b_1$ is in $X_U$, then 
both $a_r$ and $b_q$ must be in $X_V$, as otherwise there would be an edge between 
$\{a_1, b_1\}$ and $\{a_r, b_q\}$.
But $a_r$ and $b_q$ must share an edge, whereas $X_V$ is an independent set.
Thus we may assume 
$\{a_1, b_1\} \cap X_U = \emptyset$.
Suppose that $a_1$ or $b_1$ is in $X_V$, say $a_1$.
Because $b_1 \notin X_U$ as argued above, we must have $b_1 \in \{s_1, s_2, t_1, t_2\}$.
Suppose w.l.o.g. that $b_1 = s_1$.
Hence $a_1 = c_1'$.
Now consider the location of the $x_1$ vertex of $\grq{r}{q}$.
Then $x_1$ must be in $X_U$, in which case $x_1$ is a neighbor of $s_1 = b_1$, 
contradicting that $I$ is a copy of $\grq{r}{q}$.
Therefore, we may assume that $\{a_1, b_1\} \cap X_V = \emptyset$.  By applying the same argument on $a_r$ and $b_q$, we deduce that $\{a_1, b_1, a_r, b_q\} = \{s_1, s_2, t_1, t_2\}$.  
We will suppose, without loss of generality, 
that $a_1 = s_1, b_1 = s_2$ and $\{a_r, b_q\} = \{t_1, t_2\}$
(otherwise we may relabel the vertices of the $\grq{r}{q}$ copy, though note that in doing so we cannot make assumptions on which $t_i$ corresponds to which of $\{a_r, b_q\}$). 

Now let $(s_1 = a_1, a_2, \ldots, a_r = t_j)$, $j \in \{1, 2\}$ be the 
path between the ``top'' corners of the $\grq{r}{q}$ copy in $H$, 
such that $a_ia_{i + 1}$ share a common neighbor $x_i$ of degree $2$ in $G[I]$, $i \in [r - 1]$.
Similarly, let $(s_2 = b_1, b_2, \ldots, b_q = t_l)$,
($l \in \{1, 2\}$ and $l \neq j$) be
the path between the ``bottom'' corners of $\grq{r}{q}$, such that $b_ib_{i + 1}$ share a common neighbor $y_i$ of degree $2$ for $i \in [q - 1]$.
We claim that $a_i \in X_U$ for each $2 \leq i \leq r - 1$.  Suppose instead that some $a_i$ is not in $X_U$.  Since $s_1 = a_1$ is a neighbor of $a_i$, we must have $a_i = c_1'$ (the only other possibility is $a_i = s_2$, but $s_2 = b_1$).  The common neighbor $x_{i - 1}$ of $a_{i - 1}$ and $a_i$  therefore lies in $X_U$.  But then, $x_{i - 1}$ is a neighbor of $s_2 = b_1$, which is not possible.  Therefore, each $a_i$ belongs to $X_U$.  By symmetry, each $b_i$ also belongs to $X_U$.
This implies that every $x_i$ and $y_i$ belong to $X_V$, with $x_1 = c_1', x_{r - 1} = d_j', 
y_1 = c_2'$ and $y_{q - 1} = d_l'$.

We can finally find our independent paths $P_1$ and $P_2$.  
It is straightforward to check that $\{s, t, c_1\} \cup \{u : u' \in \{x_i, a_i\} $ for  $2 \leq i \leq r - 1\}$ induces a path $P_1$ 
from $s$ to $t$ in $G$.
Similarly, 
$\{s, t, c_2\} \cup \{u : u' \in \{y_i, b_i\}$ for  $2 \leq i \leq q - 1\}$ also induces a path $P_2$ from $s$ to $t$.
Moreover, $P_1$ and $P_2$ share no internal vertex.

It only remains to show that $P_1$ and $P_2$ are independent, i.e. form an induced cycle.  
We prove that $G[V(P_1) \cup V(P_2)]$ has maximum degree $2$.
Suppose there is a vertex $v$ of degree at least $3$ in $G[V(P_1) \cup V(P_2)]$.
Then $v \notin \{s, t\}$ since they have degree $2$ in $G$.  
Moreover, $v \notin V$, as otherwise, $v' \in I_V$ 
which implies that $v'$ is an $x_i$ or a $y_i$ and, by construction, $v'$ has at least $3$ neighbors in $I$, a contradiction.
Thus $v \in U$, and its $3$ neighbors lie in $V$.
Hence, $v'$ is either an $a_i$ or a $b_i$ and has three neighbors in $I_V$, which is again a contradiction.
This concludes the proof.
\qed
\end{proof}

\vspace{-6mm}

\section{Conclusion}

\vspace{-1mm}

In this paper, we have shown that leaf powers cannot be characterized by strong chordality and a finite set of forbidden subgraphs.
However, many questions asked here may provide more insight on 
leaf powers.  For one, is the condition of Proposition~\ref{prop:must-display-cycles}
sufficient?  And if so, can it be exploited for some algorithmic or graph theoretic purpose?  Also, we do not know if large alternating cycles are important, 
since so far, every non-leaf power could be explained by checking its alternating cycles of length $4$ or $6$.  A constant bound on the length of ``important'' alternating cycles
would allow enumerating them in polynomial time.

Also, we have exhibited an infinite family of strongly chordal non-leaf powers (along with some variations of it), but it is likely that there are others.  
One potential direction is to try to generalize all of the seven graphs found in~\cite{nevries2016towards}.   The clique arrangement
of $\grq{r}{q}$ may be informative towards this goal.
Finally on the hardness of recognizing leaf powers, the hardness of finding $\grq{r}{q}$
in strongly chordal graphs is of special interest.  
A NP-hardness proof would now be significant evidence towards the difficulty of deciding leaf power membership.
And in the other direction, a polynomial time recognition algorithm may provide important insight on how 
to find forbidden structures in leaf powers.

\bibliographystyle{plain}
\bibliography{main}

\newpage

\section*{Appendix}

\subsection{Proof of Lemma~\ref{lem:degone-dominating}}

\begin{proof}
For the non-obvious direction, suppose that $(T, f)$ is a leaf root of $G - v$ with threshold $k$.
Let $w$ be the neighbor of $v$ in $G$.  Then in $(T, f)$ we can do 
the following modification: 
subdivide the edge $wz$ incident to $w$ into two edges $wz'$ and 
$z'z$, set $f(wz') = 0, f(z'z) = f(wz)$, and insert $v$ by making it 
adjacent to $z'$ and setting $f(vz') = k$.  As $f(z'z) > 0$, 
$(T, f)$ satisfies the distance constraints on $v$, and we can apply 
Lemma~\ref{lem:zero-edges}
for the $0$ edge.
\qed
\end{proof}

\subsection*{Proof of Lemma~\ref{lem:alt-cycle-main}, sufficiency}

\begin{figure*}[!t]
\begin{center}
\includegraphics[width= 1.0 \textwidth]{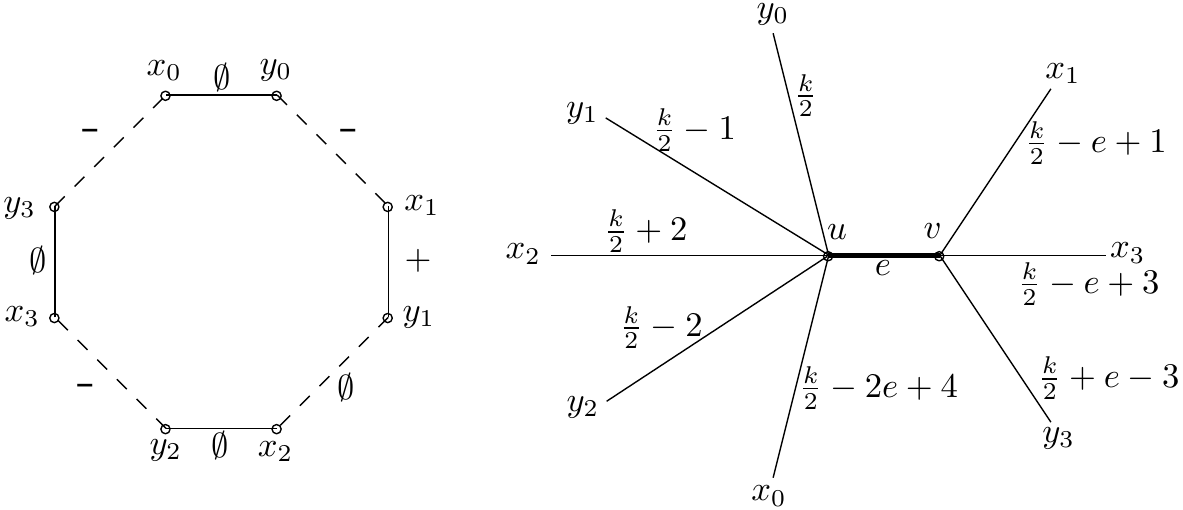}
\caption{An example alternating cycle $C$ and tree $T$ with a single internal edge $uv$.  Edges and non-edges corresponding to a positive and negative path containing $uv$ are labeled with a ``$+$'' and ``$-$'', respectively, while the ``$\emptyset$'' label is for edges/non-edges corresponding to a path that is not negative nor positive.
Here $y_0$, $y_2$ and $y_3$ are negative, whereas $y_1$ is positive.
The tree is labeled with the weights that would be given by the greedy procedure of the proof, starting at $y_0$.}\label{fig:altcycle}
\end{center}
\end{figure*}

$(\Leftarrow)$: let $uv$ be an edge that appears on strictly more negative paths than positive paths.  
We prove that if $u$ and $v$ are the only two internal vertices of $T$ (i.e. every leaf is adjacent to either $u$ or $v$, as in Figure~\ref{fig:altcycle}), then the statement holds.  By Lemma~\ref{lem:refinement}, this is sufficient, since we can refine $T$ into the desired tree afterwards.

Some more notation is required.
For $z \in V(C)$, denote by $a(z)$ the single neighbor of 
$z$ in $T$ (with $a(z) \in \{u, v\}$).
For a weighting function $f$, we may write $f(z)$ instead of $f(z, a(z))$.
Let $X_0 = N(u) \setminus \{v\}$ and $X_1 = N(v) \setminus \{u\}$.
Call two vertices $x, y$ \emph{separated} if $x \in X_i$
and $y \in X_{1 - i}$ for some $i \in \{0,1\}$.
Call a vertex $y_i \in V(C)$ \emph{positive} if $y_i$ and 
$x_{i - 1}$ are not separated, but $y_i$ and $x_i$ are (i.e. $y_i$ gives a ``positive charge'' to $uv$).
Likewise, $y_i$ is \emph{negative} if $y_i$ and $x_{i - 1}$ are separated but $y_i$ and $x_i$ are not.
Otherwise, $y_i$ is \emph{neutral}.

Let $P, N$ and $Z$ denote, respectively, the 
number of positive, negative and neutral vertices among the $y_i$'s.
Then we must have $N \geq P + 2$.
To see this, first note that the number of paths, positive or negative, that go through $uv$ must be even (since for each path that starts in $X_0$ and goes to $X_1$, there must be a corresponding path from $X_1$ to $X_0$).  Thus $uv$ is on at least two more negative paths than positive paths.  
Then $N \geq P + 2$ follows, since negative vertices correspond to a negative path that cannot be matched with a positive path. 

We now construct a weighting $f$ of $T$ so that it satisfies $C$.
Put $e := f(uv) = c^2$ and set $k := 2c^{10}$ to be the threshold for $T$.\footnote{We multiply $k$ by $2$ to ensure it is even}
Suppose that $y_0$ is negative (otherwise relabel vertices), and set $f(y_0) = k/2$.  Then traverse $C$ in cyclic order starting from $x_1$ towards $y_1$ until $x_0$ is reached, weighting each vertex $z$ encountered in a greedy manner, as follows:

\begin{itemize}

\item
if $z = x_i$ and $x_i$ is separated from $y_{i - 1}$, 
set $f(x_i) = k + 1 - e - f(y_{i - 1})$;

\item
if $z = x_i$ and $x_i$ is not separated from $y_{i - 1}$, set $f(x_i) = k + 1 - f(y_{i - 1})$;

\item
if $z = y_i$ and $y_i$ is separated from $x_i$, set $f(y_i) = k - e - f(x_i)$;

\item
if $z = y_i$ and $y_i$ is not separated from $x_i$, set $f(y_i) = k - f(x_i)$.

\end{itemize}

At the end of this process, every edge of $T$ will be weighted.  
Refer to Figure~\ref{fig:altcycle} for an example application of this procedure.
One can check that all edge weights are positive integers since $f(z) \geq k/2 - ce$ for all $z \in V(C)$.
When the process stops at $x_0$, by construction $f$ satisfies every edge and non-edge of $C$, except possibly the $x_0y_0$ edge of $C$.
Hence it suffices to show that the above weighting satisfies 
$d_f(x_0, y_0) \leq k$.
Since $y_0$ is negative, it is not separated from $x_0$, and hence we must show that $f(x_0) \leq k/2$.

We have $f(x_1) = k/2 + 1 - e$.
Now for $i \in \{1, \ldots, c - 1\}$, consider 
$\Delta_i := f(x_{i + 1}) - f(x_i)$ (recall that indices are modulo $c$).
If $y_i$ is positive, then $f(y_i) = k - f(x_i) - e$
and $f(x_{i + 1}) = k + 1 - f(y_i) = f(x_i)  + e + 1$,
implying $\Delta_i = e + 1$.
Using the same logic on the other cases, we obtain that if 
$y_i$ is negative, $\Delta_i = -e + 1$ and 
if $y_i$ is neutral, $\Delta_i = 1$.
Since $f(x_0) = f(x_1) + \sum_{i = 1}^{c - 1} \Delta_i$, 
we have $f(x_0) = f(x_1) + P \cdot (e + 1) + (N - 1) \cdot (-e + 1) + Z$ (we must use $N - 1$ instead of $N$ because $y_0$ is negative, but is not between $x_1$ and $x_0$ in the visited cyclic order).
This yields $f(x_0) \leq k/2 + 1 - e + e(P - N + 1) + c$. Since $N \geq P + 2$ and $e = c^2$, we obtain $f(x_0) \leq k/2$ as desired.

\end{document}